\def\Exp{\qopname\relax m{E}}
\newcommand{\runtime}[3]%
{$O^{*}(#1^{#3})=O^{*}(2^{#2\cdot #3})$}
\def\Violators{{\sf V}}
\def\Extremes{{\sf X}}
\newcommand{\Proof}[1]%
	{%
	\noindent %
	\textbf{Proof} #1.~%
	}%
\newsavebox{\smallProofsym}%                          % smallproofsym.tex
\newcommand{\smallpop}[1]%
	{%
	\mbox{} \hfill #1~~\usebox{\smallProofsym}\!\!\!\!\!\!\ %
	}%
	{%
	\Proof{#1}}{\smallpop{}%
	\medskip%
	}%
\newsavebox{\smallEndsym}%
\newcommand{\smalleop}[1]%
	{%
	\mbox{} \hfill #1~~\usebox{\smallEndsym}\!\!\!\!\!\!\ %
	}
\newsavebox{\smallDefsym}
\newcommand{\smalldop}[1]%
	{%
	\mbox{} \hfill #1~~\usebox{\smallDefsym}\!\!\!\!\!\!\ %
	}
\newtheorem{theorem2}{Theorem}[section]
\newenvironment{theorem}{\begin{theorem2}}{\end{theorem2}}
\newtheorem{lemma2}[theorem2]{Lemma}
\newenvironment{lemma}{\begin{lemma2}}{\end{lemma2}}
\newtheorem{corollary2}[theorem2]{Corollary}
\newenvironment{corollary}{\begin{corollary2}}{\end{corollary2}}
\newtheorem{definition2}[theorem2]{Definition}
\newenvironment{definition}{\begin{definition2}}{\end{definition2}}
\newtheorem{claim2}[theorem2]{Claim}
\newenvironment{claim}{\begin{claim2}}{\end{claim2}}
\theoremstyle{remark}
\newtheorem{remark2}{Remark}[section]
\newtheorem{example2}{Example}[section]
\let\newEnumerate=\enumerate
\renewenvironment{enumerate}{\newEnumerate\itemsep=0.0cm}{\endlist}
\let\newItemize=\itemize
\renewenvironment{itemize}{\newItemize\itemsep=0.0cm}{\endlist}
\newcommand{\ngt}[1]		%negation of variable
	{\overline{#1}}
	\newcommand{\SAT}		% sat
		{{\mathrm{sat}}}
	\newcommand{\SAt}		% sat with small `t'
		{{\mathrm{sa{\mbox{\footnotesize $\mathrm{t}$}}}}}
	\newcommand{\UNSAT}		% set of non-satisfying assign'ts
		{\overline{\SAt}}
\newcommand{\Unsat}[2]		% set of non-satisfying assignments
	{\UNSAT_{#1}(#2)}
\renewcommand{\arraystretch}{1.15}
\def\stageI{\texttt{GA}}
\def\stageII{\texttt{SA}}
\def\stageIIforever{\texttt{SA\_forever}}
\def\stageIII{\texttt{BFA}}
\begin{document}

\date{June 25, 2009}
\title{Clarkson's Algorithm for Violator Spaces}
\author{Yves Brise, Bernd G\"artner\\
{\small Swiss Federal Institute of Techology (ETHZ)}\\{\small 8092 Zurich, Switzerland}\\{\small 
\texttt{ybrise | gaertner@inf.ethz.ch}}\\\\
}

\maketitle
\begin{center}
	\textbf{Abstract}\\
\end{center}
Clarkson's algorithm is a two-staged randomized algorithm for solving linear programs. 
This algorithm has been simplified and adapted to fit the framework of LP-type problems. In this
framework we can tackle a number of non-linear problems such as computing the smallest
enclosing ball of a set of points in $\mathbb{R}^d$.
In 2006, it has been shown that the algorithm in its original form works for violator spaces too, which 
are a proper
generalization of LP-type problems. It was not clear, however, whether previous simplifications of the 
algorithm carry over to the new setting.

In this paper we show the following theoretical results: $(a)$ It is shown, for the first time, that Clarkson's
second stage can be simplified.  $(b)$ The previous simplifications of Clarkson's first stage carry
over to the violator space setting.  
$(c)$ Furthermore, we show the equivalence of violator spaces and partitions of the hypercube by hypercubes.

~\\
\textbf{Keywords:} Clarkson's Algorithm, Violator Space, LP-type Problem, Hypercube Partition

%%%%%%%%%%%%%%%%%%%%%%%%
\section{Introduction}
%\nocite{*}
\paragraph{Clarkson's algorithm.}
Clarkson's randomized algorithm \cite{c-lvali-95} is the earliest
practical linear-time algorithm for linear programming with a fixed
number of variables.  Combined with a later algorithm by Matou{\v{s}}ek,
Sharir and Welzl \cite{MSW}, it yields the best (expected) worst-case
bound in the unit cost model that is known today.  The combined
algorithm can solve any linear program with $d$ variables and $n$
constraints with an expected number of $O(d^2 n + \exp(O(\sqrt{d\log
  d})))$ arithmetic operations \cite{GaerWel1}.

Clarkson's algorithm consists of two primary stages, and it requires
as a third stage an algorithm for solving small linear programs with
$O(d^2)$ constraints. The first two stages are purely combinatorial and use
very little problem-specific structure.  Consequently, they smoothly
extend to the larger class of \emph{LP-type problems} \cite{MSW},
with the same running time bound as above for concrete problems in this
class, like finding the smallest enclosing ball of a set of $n$ points
in dimension $d$ \cite{GaerWel1}.

Both primary stages of Clarkson's algorithm are based on random sampling and
are conceptually very simple. The main idea behind the use of randomness
is that we can solve a subproblem subject to only a small number of
(randomly chosen) constraints, but still have only few (of all) constraints
that are violated by the solution of the subproblem.
 However, some extra machinery was
originally needed to make the analysis go through.  More precisely,
in both stages there needed to be a check that the each individual
random choice was ``good'' in a certain sense. Then in the analysis
one needed to make the argument that the bad cases do not
occur too often. For the first
stage, it was already shown by G\"artner and Welzl that these extra checks
 can be removed \cite{GWSampl01}. The result is what we call
the \emph{German Algorithm} below. In this paper, we do the removal
also for the second stage, resulting in the \emph{Swiss Algorithm}.
(The names come from certain aspects of German and Swiss mentality
that are reflected in the respective algorithms.) We believe that the
German and the Swiss Algorithm together represent the essence of
Clarkson's approach.

\paragraph{Violator spaces.}
G\"artner, Matou\v{s}ek, R\"ust, and \v{S}kovro\v{n} proved that
Clarkson's original algorithm is applicable in a still broader setting
than that of LP-type problems: It actually works for the class of
\emph{violator spaces} \cite{journals/dam/GartnerMRS08}. At first
glance, this seems to be yet another generalization to yet another
abstract problem class, but as \v{S}kovro\v{n} has shown, it stops
here: the class of violator spaces is the most general one for which
Clarkson's algorithm is still guaranteed to work \cite{skovronP}. In a
nutshell, the difference between LP-type problems and violator spaces
is that for the latter, the following trivial algorithm may cycle even
in the nondegenerate case: maintain the optimal solution subject to a
subset $B$ of the constraints; as long as there is some constraint $h$
that is violated by this solution, replace the current solution by the
optimal solution subject to $B\cup\{h\}$, and repeat. Examples of
such \emph{cyclic} violator spaces can be found in \cite{skovronP}.
For a very easy and intuitive example see also  \cite{journals/dam/GartnerMRS08}.

It was unknown whether the analysis of the German Algorithm (the
stripped-down version of Clarkson's first stage) also works for
violator spaces. For LP-type problems, the analysis is nontrivial and
constructs a ``composite'' LP-type problem. Here we show that this can
still be done for violator spaces, in essentially the same way.

For the Swiss Algorithm (the stripped-down version of Clarkson's
second stage), we provide the first analysis at all. The fact that it
works in the fully general setting of violator spaces comes naturally.

The main difference of the German and the Swiss algorithm compared
to their original formulations is the following. In both stages, at some point,
Clarkson's algorithm checks how many violated constraints some random
sample of constraints produces.
If there are too many, then the algorithm discards the sample and resamples.
The reason for this is that the analysis requires a bound on the number of
violators in each step. We essentially show that this bound only needs to hold
in expectation (and does so) for the analysis to go through.
So the  checks that we mentioned before are only an analytical tool, and not necessary for
the algorithms to work.

Let us point out that no subexponential
algorithm for finding the basis (i.e. ``solution'') of a violator space is known. Therefore,
we can only employ brute force to ``solve'' small violator spaces. Note that,
e.g., in the context of linear programming, finding a basis means identifying the
constraints which are tight at an optimal point. We
call this the Brute Force Algorithm (\stageIII). Hence, the resulting best
worst-case bound known degrades to $O(d^2n + f(d))$, where $f$ is
some exponential function of $d$. In this paper, we will not investigate
this point further and use \stageIII{ }as a black box.

\paragraph{The German Algorithm (\stageI).}
Let us explain the algorithm for the problem of finding the smallest
enclosing ball of a set of $n$ points in $\mathbb{R}^d$ (this problem fits
into the violator space framework). The algorithm proceeds in rounds
and maintains a working set $G$, initialized with a subset $R$ of $r$
points drawn at random. In each round, the smallest enclosing ball of
$G$ is being computed (by some other algorithm). For the next round,
the points that are unhappy with this ball (the ones that are outside)
are being added to $G$. The algorithm terminates as soon as everybody
is happy with the smallest enclosing ball of $G$.

The crucial fact that we reprove below in the violator space framework
is this: the number of rounds is at most $d+2$, and for $r\approx
d\sqrt{n}$, the expected maximum size of $G$ is bounded by
$O(d\sqrt{n})$. This means that \stageI{ }reduces a
problem of size $n$ to $d+2$ problems of expected size $O(d\sqrt{n})$.
We call this the German Algorithm, because it takes -- typically
German -- one decision in the beginning which is then efficiently
pulled through.

\paragraph{The Swiss Algorithm (\stageII).} 
Like \stageI, this algorithm proceeds in rounds, but it
maintains a voting box that initially contains one slip per point. In
each round, a set of $r$ slips is drawn at random from the voting box,
and the smallest enclosing ball of the corresponding set $R$ is
computed (by some other algorithm). For the next round, all slips are
put back, and on top of that, the slips of the unhappy points are being
doubled. The algorithm terminates as soon as everybody is happy with
the smallest enclosing ball of the sample $R$.

Below, we will prove the following: if $r\approx d^2$, the expected number
of rounds is $O(\log n)$. This means that \stageII{ }reduces a
problem of size $n$ to $O(\log n)$ problems of size $O(d^2)$. We call
this the Swiss Algorithm, because it takes -- typically Swiss -- many 
independent local decisions that magically fit together in the end.

\paragraph{Hypercube partitions.}
A hypercube partition is a partition of the vertices of the hypercube
such that every element of the partition is the set of vertices of
some subcube. It was known that every \emph{nondegenerate} violator
space induces a hypercube partition
\cite{SkovronMatousekLPdim,MatousekLPdim}.  We prove here that also
the converse is true, meaning that we obtain an alternative
characterization of the class of violator spaces. While this result is
not hard to obtain, it may be useful in the future for the problem of
counting violator spaces. Here, the initial bounds provided by
\v{S}kovro\v{n} are still the best known ones \cite{skovronP}.

\paragraph{Applications.} 
We would love to present a number of convincing
applications of the violator space framework, and in particular of the
German and the Swiss Algorithm for violator spaces. Unfortunately, we
cannot. There is one known application of Clarkson's algorithm that
really requires it to work for violator spaces and not just LP-type
problems \cite{journals/dam/GartnerMRS08}; this application (solving
generalized $P$-matrix linear complementarity problems with a fixed
number of blocks) benefits from our improvements in the
sense that now also the German and the Swiss Algorithm are applicable
to it (with less random resources than Clarkson's algorithm).
%But apart from that, there is no new application that we can throw in.

Our main contributions are therefore theoretical: we show that
Clarkson's second stage can be simplified (resulting in the Swiss
Algorithm), and this result is new even for LP-type problems and
linear programming. The
fact that Clarkson's first stage can be simplified (resulting in the
German Algorithm) was known for LP-type problems; we extend it to
violator spaces, allowing the German Algorithm to be used for solving
generalized $P$-matrix linear complementarity problems with a fixed
number of blocks.

We  believe that our results are significant
contributions to the theory of abstract optimization frameworks
themselves. We have now arrived at a point where Clarkson's algorithm
has been shown to work in the most general abstract setting that is
possible, and in probably the most simple variant that can still
successfully be analyzed.

%%%%%%%%%%%%%%%%%%%%%%%%
\section{Prerequisites}

%%%%%%%%%%%%%%%%%%%%%%%%
\subsection{The Sampling Lemma}
The following lemma is due to G\"artner and Welzl in 
\cite{GWSampl01} and was adapted to violator
spaces in \cite{journals/dam/GartnerMRS08}. We repeat it here for the sake
of completeness, and because its proof and formulation are very
concise.
Let $S$ be a set of size $n$, and $\varphi: 2^S\rightarrow \mathbb{R}$  a function that maps any
set $R\subseteq S$ to some value $\varphi(R)$. Define
\begin{eqnarray}
	\Violators(R) &:=& \{ s\in S\backslash R \,|\, \varphi(R\cup \{s\}) \neq \varphi(R)\}, \label{def:violator}\\
	\Extremes(R) &:=& \{s\in R \,|\, \varphi(R\backslash\{s\})\neq \varphi(R)\}\label{def:extreme}.
\end{eqnarray}
$\Violators(R)$ is the set of \emph{violators} of $R$, while
$\Extremes(R)$ is the set of \emph{extreme elements} in $R$. Obviously,
\begin{equation}
	s \mbox{ violates } R \Leftrightarrow s \mbox{ is extreme in } R\cup\{s\}.\label{eqViolators}
\end{equation}

For a random sample $R$ of size $r$, i.e., a set $R$ chosen uniformly at random from the set ${S 
\choose r}$ of all $r$-element subsets of $S$, we define random variables $\Violators_r: R \mapsto |
\Violators(R)|$ and
$\Extremes_r: R \mapsto |\Extremes(R)|$, and we consider the expected values
\begin{eqnarray*}
	v_r &:=& \Exp[\Violators_r],\\
	x_r &:=& \Exp[\Extremes_r].
\end{eqnarray*}

\begin{lemma}[Sampling Lemma, \cite{GWSampl01, journals/dam/GartnerMRS08}]
\label{lemma:sampling}
For $0\leq r < n$,
\[
	\frac{v_r}{n-r} = \frac{x_{r+1}}{r+1}.
\]
\end{lemma}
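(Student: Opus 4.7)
The plan is to prove the identity by a double-counting argument applied to the set
\[
	P := \{(R,s) : R \in \tbinom{S}{r},\ s \in S\setminus R,\ s \text{ violates } R\}.
\]
The equivalence recorded in \eqref{eqViolators} lets us rewrite the same set of pairs as
\[
	P = \{(T\setminus\{s\},s) : T \in \tbinom{S}{r+1},\ s \in \Extremes(T)\},
\]
so counting $|P|$ in the two natural ways will produce the identity.

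First I would count $|P|$ by grouping pairs according to their first coordinate: for each $R \in \binom{S}{r}$ the number of valid $s$ is exactly $|\Violators(R)|$, so summing gives $|P| = \sum_R |\Violators(R)| = \binom{n}{r}\, v_r$, using the definition of $v_r$ as the uniform average of $|\Violators(R)|$ over all $r$-subsets. Next I would count $|P|$ by grouping the pairs according to the $(r+1)$-set $T = R \cup\{s\}$: for each $T \in \binom{S}{r+1}$ the number of valid $s \in T$ is $|\Extremes(T)|$ by \eqref{eqViolators}, so $|P| = \binom{n}{r+1}\, x_{r+1}$.

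Equating the two expressions yields $\binom{n}{r}\, v_r = \binom{n}{r+1}\, x_{r+1}$, and since $\binom{n}{r+1}/\binom{n}{r} = (n-r)/(r+1)$, dividing gives the claimed identity
\[
	\frac{v_r}{n-r} = \frac{x_{r+1}}{r+1}.
\]

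There is no serious obstacle here; the only point requiring care is making sure the bijection between the two descriptions of $P$ is exact, i.e.\ that \eqref{eqViolators} really identifies ``$s$ violates $R$'' with ``$s$ is extreme in $R\cup\{s\}$'' without any hidden asymmetry between violators (drawn from $S\setminus R$) and extremes (drawn from $T$). Once that is observed, the proof is essentially a one-line swap of summation order.
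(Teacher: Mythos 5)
Your double-counting of the pairs $(R,s)$ with $s$ violating $R$, regrouped via $T=R\cup\{s\}$ and the equivalence (\ref{eqViolators}), is exactly the paper's argument, including the final step ${n\choose r+1}/{n\choose r}=(n-r)/(r+1)$. The proof is correct and matches the paper's approach.
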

\begin{proof}
	Using the definitions of $v_r$ and $x_{r+1}$ as well as (\ref{eqViolators}), we can argue as 
follows:

\begin{center}
\begin{math}
\renewcommand{\arraystretch}{2.15}
\displaystyle
\begin{array}{lll}
\displaystyle
	{n \choose r} v_r &=& \displaystyle\sum_{R\in{S\choose r}} \sum_{s\in S\backslash R}[s \mbox{ violates } R] \\
	&=& \displaystyle \sum_{R\in{S\choose r}} \sum_{s\in S\backslash R} [s \mbox{ is extreme in } R\cup \{s\}]\\
	&=& \displaystyle \sum_{Q\in {S \choose r+1}} \sum_{s\in Q} [s \mbox{ is extreme in } Q] \\
	&=& \displaystyle{n \choose r+1}x_{r+1}.
\end{array}
\end{math}
\end{center}

Here, $[\cdot]$ is the indicator variable for the event in brackets. Finally, ${n \choose r+1}/{n \choose 
r} = (n-r)/(r+1)$.
\end{proof}

%%%%%%%%%%%%%%%%%%%%%%%%
\subsection{Violator Spaces}

\begin{definition}
\label{def:violatorspace}
A \emph{violator space} is a pair $(H,\Violators)$, where $H$ is a finite set and $\Violators$
is a mapping $2^{H}\rightarrow 2^{H}$ such that the following two conditions are fulfilled.

\vspace{0.2cm}
\begin{tabular}{ll}
	Consistency: & $G\cap \Violators(G) = \emptyset$ holds for all $G\subseteq H$, and \\
	Locality: & for all $F\subseteq G \subseteq H$, where $G\cap \Violators(F) = \emptyset$,\\
	& we have $\Violators(G) = \Violators(F)$.
\end{tabular}  
\end{definition}

\begin{lemma}[Lemma 17, \cite{journals/dam/GartnerMRS08}]
\label{lemma:monotonicity}
Any violator space $(H,\Violators)$ satisfies \emph{monotonicity} defined as follows:

\vspace{0.2cm}
\begin{tabular}{ll}
	Monotonicity: & $\Violators(F) = \Violators(G)$ implies $\Violators(E) = \Violators(F)=
\Violators(G)$\\
	& for all sets $F\subseteq E \subseteq G \subseteq H$.
\end{tabular}  
\end{lemma}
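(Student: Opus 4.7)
The plan is to derive monotonicity as an almost immediate consequence of locality, using consistency as a bridge. Concretely, I would aim to apply locality directly to the pair $F \subseteq E$ (with $E$ playing the role of ``$G$'' in Definition~\ref{def:violatorspace}), so I need to verify the hypothesis $E \cap \Violators(F) = \emptyset$.

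First I would observe that by consistency applied to $G$, we have $G \cap \Violators(G) = \emptyset$. Since we are assuming $\Violators(F) = \Violators(G)$, this immediately rewrites as $G \cap \Violators(F) = \emptyset$. Because $E \subseteq G$, it follows that $E \cap \Violators(F) = \emptyset$ as well.

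Now locality, applied to the inclusion $F \subseteq E$, yields $\Violators(E) = \Violators(F)$. Combined with the hypothesis $\Violators(F) = \Violators(G)$, this gives the desired chain of equalities $\Violators(E) = \Violators(F) = \Violators(G)$.

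There is no real obstacle here: the whole argument is a single invocation of locality after observing that consistency transfers the ``no violator'' property from $G$ down to any subset containing $F$. The only thing to be careful about is not to confuse the roles of $F$ and $G$ in the statement of locality, and to note that one does not need to invoke locality a second time with $E \subseteq G$ (though one could, and it would give the same conclusion).
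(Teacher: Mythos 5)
Your proof is correct and is essentially the paper's argument: the paper phrases it as a contradiction (assuming $\Violators(E)\neq\Violators(F)$ and using the contrapositive of locality to violate consistency), while you run the same two ingredients, consistency of $G$ combined with $\Violators(F)=\Violators(G)$, followed by locality applied to $F\subseteq E$, in direct form.
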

\begin{proof}
Assume $\Violators(E)\neq\Violators(F),\Violators(G)$. Then locality yields
$\emptyset\neq E\cap\Violators(F)=E\cap \Violators(G)$ which contradicts consistency.
\end{proof}

\begin{definition}
Consider a violator space $(H,\Violators)$.
\begin{itemize}
\item[(i)]We say that $B\subseteq H$ is a \emph{basis}
if for all proper subsets $F\subset B$ we have $B\cap\Violators(F)\neq \emptyset$. For $G\subseteq 
H$, a basis of $G$ is a minimal subset $B$ of $G$ with $\Violators(B)=\Violators(G)$.
A basis in $(H,\Violators)$ is a basis of some set $G\subseteq H$.
\item[(ii)] The \emph{combinatorial dimension} of $(H,\Violators)$, denoted by $\dim(H,\Violators)$,
is the size of the largest basis in $(H,\Violators)$.
\item[(iii)] $(H,\Violators)$ is \emph{nondegenerate} if every set set $G\subseteq H$, $|G|\geq \dim(H,
\Violators)$,
has a unique basis. Otherwise $(H,\Violators)$ is \emph{degenerate}.
\end{itemize}
\end{definition}

Observe that a minimal subset $B\subseteq G$ with $\Violators(B)=\Violators(G)$ is indeed a basis:
Assume for contradiction that there is a set $F\subset B$ such that $B\cap\Violators(F)=\emptyset$.
Locality then yields $\Violators(B)=\Violators(F)=\Violators(G)$, which contradicts the minimality of $B$.
Also, note that, because of consistency, any basis $B$ of $H$ has no violators $\Violators(H)=\Violators(B) = \emptyset$.

\begin{corollary}[of Lemma \ref{lemma:sampling}]
Let $(H,\Violators)$ be a violator space of combinatorial dimension $d$, and $|H|=n$. If we choose
a subset $R\subseteq H$, $|R|=r \leq n$, uniformly at random, then
\[
	\Exp[|\Violators(R)|] \leq d \frac{n-r}{r+1}.
\]
\end{corollary}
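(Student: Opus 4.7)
The plan is to combine the Sampling Lemma with the bound $|\Extremes(R)| \leq d$ for every $R \subseteq H$. Concretely, I would first show that every extreme element of $R$ lies in every basis of $R$, so that $|\Extremes(R)|$ is bounded by the combinatorial dimension, and then instantiate the Sampling Lemma.

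First I would establish the key claim: for any $R \subseteq H$ and any basis $B$ of $R$, $\Extremes(R) \subseteq B$. Suppose for contradiction that there is some $s \in \Extremes(R)$ with $s \notin B$. Then $B \subseteq R \setminus \{s\} \subseteq R$. By definition of a basis, $\Violators(B) = \Violators(R)$. Applying Lemma \ref{lemma:monotonicity} to the chain $B \subseteq R \setminus \{s\} \subseteq R$ yields $\Violators(R \setminus \{s\}) = \Violators(R)$, contradicting the fact that $s$ is extreme in $R$. Hence $\Extremes(R) \subseteq B$, and since every basis has at most $d$ elements by definition of combinatorial dimension, $|\Extremes(R)| \leq d$ for all $R$.

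Next I would take expectations and plug into Lemma \ref{lemma:sampling}. For $r < n$, the claim gives $x_{r+1} \leq d$, so
\[
\Exp[|\Violators(R)|] \;=\; v_r \;=\; \frac{n-r}{r+1}\,x_{r+1} \;\leq\; d\,\frac{n-r}{r+1}.
\]
The boundary case $r = n$ is trivial: by consistency, $\Violators(H) = \emptyset$, so both sides vanish.

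I do not expect a real obstacle here; the only subtlety is making sure that the monotonicity argument is applied correctly (so that the inclusion $\Extremes(R) \subseteq B$ actually forces $|\Extremes(R)| \leq \dim(H,\Violators)$ regardless of degeneracy — which it does, since the argument works for \emph{any} basis of $R$).
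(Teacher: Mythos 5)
Your proof is correct and takes essentially the same route as the paper: invoke the Sampling Lemma together with the observation that $|\Extremes(R)|\leq d$ for all $R\subseteq H$. You merely spell out, via monotonicity, why every extreme element lies in every basis of $R$ (which is exactly the observation the paper leaves implicit), and you additionally dispose of the trivial boundary case $r=n$.
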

\begin{proof}
The corollary follows from the Sampling Lemma \ref{lemma:sampling}, with the observation
that $|X(R)|\leq d$, $\forall R\subseteq H$.
\end{proof}

%%%%%%%%%%%%%%%%%%%%%%%%
\section{Clarkson's Algorithm Revisited}
Clarkson's algorithm can be used to compute a basis of some violator space $(H,\Violators)$, $n=|H|$.
It consists of two separate stages
and the Brute Force Algorithm (\stageIII). 
The results about the running time and the size of the sets involved is summarized in
Theorem \ref{theorem:runningtime1} and Theorem \ref{theorem:runningtime2}.

The main idea of both stages (\stageI{ }and \stageII) is the following:  We draw a random sample $R\subseteq H$ of
size $r=|R|$ and then compute a basis of $R$ using some other algorithm. The crucial point here is that $r\ll n$ hopefully. Obviously,
such an approach may fail to find a basis of $H$, and we might have to reconsider and enter a second round.
That is the point at which \stageI{ }and \stageII{ }most significantly differ.

In both stages we assume that the size of the ground set, i.e., $n$, is larger than $r$, such that we can
actually draw a sample of that size. We can assume this w.l.o.g., because it is easy to
incorporate an if statement at the beginning that directly calls the other algorithm should $n$ be too small. 

%%%%%%%%%%%%%%%%%%%%%%%%
\subsection{The German Algorithm (\stageI)}
This algorithm works as follows. Let $(H,\Violators)$ be a violator space, $|H|=n$, and $\dim(H,\Violators) =d$.
 We draw a random sample $R\subseteq H$, $r=d\sqrt{n/2}$, only once, 
and initialize our working set $G$ with $R$. Then we enter a repeat loop, in which we compute a
basis $B$ of $G$ and check whether there are any violators in $H$. If no,
then we are done and return the basis $B$. If yes, then we add those violators to our working
set $G$ and repeat the procedure.

The analysis will show that $(i)$ the number of rounds is bounded by $d+1$, and $(ii)$
the size of $G$ in any round is bounded by $O(d\sqrt{n})$. See Theorem \ref{theorem:runningtime1}.

\begin{algorithm}[H]
    \SetKwInOut{Input}{input}
	\SetKwInOut{Output}{output}

	\Input{Violator space $(H,\Violators)$, $|H|=n$, and $\dim(H,\Violators)=d$}
	\Output{A basis $B$ of $(H,\Violators)$}
	\Indp
	\SetInd{0.5cm}{0em}
	\BlankLine
			$r \leftarrow d \sqrt{n/2}$\;
			Choose $R$ with $|R|=r$, $R\subseteq H$ u.a.r.\;
			$G \leftarrow R$\;
			
			\Repeat{$\Violators(B)=\emptyset$}{
				$B \leftarrow$ \stageII($G, \left.\Violators\right|_{G}$)\;
				$G \leftarrow G \cup \Violators(B)$\;
		}
		\Return{$B$}
	\BlankLine
    \caption{\stageI($H,\Violators$)\label{alg:stageI}}
    \end{algorithm}
    
    We will adopt some useful notations which we will use in the following proofs.
    First, let us point out that the notation $\left.\Violators\right|_{F}$ refers to the violator mapping
restricted to some set $F\subseteq H$.
    \begin{definition}
    \label{def:intermediatesets}
    For $i\geq 0$, by
    \[
    	B_{R}^{(i)}\text{, } V_{R}^{(i)}\text{, and } G_{R}^{(i)}
    \]
    we denote the sets $B$, $\Violators(B)$, and $G$ computed in round $i$ of the
    repeat loop above.
    Furthermore, we set $G_{R}^{(0)}:= R$, while $B_{R}^{(0)}$
    and $V_{R}^{(0)}$  are undefined. In particular, we have that
   $B_{R}^{(i)}$ is a basis of $G_{R}^{(i-1)}$, and  $V_{R}^{(i)} = \Violators(G_{R}^{(i-1)})$.
    If the algorithm performs exactly $\ell$ rounds, sets with indices $i> \ell$ are defined
    to be the corresponding sets in round $\ell$.
    \end{definition}

    The next one is an auxiliary lemma that we will need further on in the analysis. It is a 
generalization
    of the fact that there is at least one element of the basis of $H$ found as a violator in every 
round (see also
    Lemma \ref{lemma:53b}).
    \begin{lemma}
    \label{lemma:53}
    	For $j<i\leq \ell$, $B_{R}^{(i)}\cap V_{R}^{(j)} \neq \emptyset$.
    \end{lemma}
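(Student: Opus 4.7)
The plan is to argue by contradiction using only the two violator-space axioms (consistency and locality), together with the monotonicity corollary from Lemma~\ref{lemma:monotonicity}. Suppose toward contradiction that $B_R^{(i)} \cap V_R^{(j)} = \emptyset$ for some indices $j < i \le \ell$. The idea is to show that this forces $V_R^{(j)}$ to be empty, which is impossible because the loop did not terminate in round $j$ (since $j < \ell$, the condition $\Violators(B) = \emptyset$ must have failed there, i.e., $V_R^{(j)} = \Violators(B_R^{(j)}) \neq \emptyset$).

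First I would record the structural facts I need: $V_R^{(j)} = \Violators(G_R^{(j-1)})$ (because $B_R^{(j)}$ is a basis of $G_R^{(j-1)}$), $V_R^{(j)} \subseteq G_R^{(j)} \subseteq G_R^{(i-1)}$ (from the update rule, using $j \le i-1$), and $\Violators(B_R^{(i)}) = \Violators(G_R^{(i-1)})$ (because $B_R^{(i)}$ is a basis of $G_R^{(i-1)}$). Then I would introduce the intermediate set
\[
G := G_R^{(i-1)} \setminus V_R^{(j)},
\]
which, by the contradictory assumption, still contains $B_R^{(i)}$, and, by consistency applied at stage $j$, still contains $G_R^{(j-1)}$.

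The next step is a two-sided squeeze on $\Violators(G)$. Applying locality to $G_R^{(j-1)} \subseteq G$ (valid because $G \cap \Violators(G_R^{(j-1)}) = G \cap V_R^{(j)} = \emptyset$ by the construction of $G$) yields
\[
\Violators(G) = \Violators(G_R^{(j-1)}) = V_R^{(j)}.
\]
Applying monotonicity (Lemma~\ref{lemma:monotonicity}) to the chain $B_R^{(i)} \subseteq G \subseteq G_R^{(i-1)}$, together with $\Violators(B_R^{(i)}) = \Violators(G_R^{(i-1)})$, yields
\[
\Violators(G) = \Violators(G_R^{(i-1)}).
\]
Combining the two gives $V_R^{(j)} = \Violators(G_R^{(i-1)})$. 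Since $V_R^{(j)} \subseteq G_R^{(i-1)}$, consistency applied to $G_R^{(i-1)}$ forces $V_R^{(j)} = \emptyset$, the desired contradiction.

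The main obstacle, which I anticipate, is finding the right auxiliary set on which to deploy locality: the natural candidates $G_R^{(j-1)}$ or $G_R^{(i-1)}$ themselves are not quite enough, because locality needs the larger set to avoid the violators of the smaller. Excising exactly $V_R^{(j)}$ from $G_R^{(i-1)}$ is the key move, and the contradictory hypothesis $B_R^{(i)} \cap V_R^{(j)} = \emptyset$ is precisely what allows $B_R^{(i)}$ to survive this excision, so that monotonicity can be applied on the other side. Once this set is chosen, the rest is a routine combination of the axioms.
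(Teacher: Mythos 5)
Your proof is correct and follows essentially the same route as the paper: a contradiction argument that sandwiches an auxiliary set between $G_R^{(j-1)}$, $B_R^{(i)}$ and $G_R^{(i-1)}$, applies locality to force its violators to equal $V_R^{(j)}$, monotonicity (via the basis property of $B_R^{(i)}$) to force them to equal $\Violators(G_R^{(i-1)})$, and then consistency of $G_R^{(i-1)}$ plus the nonemptiness of $V_R^{(j)}$ (round $j$ not being the last) to conclude. The only cosmetic difference is your choice of auxiliary set $G_R^{(i-1)}\setminus V_R^{(j)}$ where the paper uses $B_R^{(i)}\cup G_R^{(j-1)}$; both verify the locality hypothesis for the same reason (the assumed disjointness $B_R^{(i)}\cap V_R^{(j)}=\emptyset$ plus consistency), so the arguments are interchangeable.
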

    \begin{proof}
    Assume that $B_{R}^{(i)}\cap V_{R}^{(j)} = \emptyset$. Together with consistency,
    $G_{R}^{(j-1)} \cap V_{R}^{(j)} = \emptyset$, this implies
    \[
    	(B_{R}^{(i)} \cup G_{R}^{(j-1)}) \cap V_{R}^{(j)} = \emptyset.
    \]
   Now, applying locality and the definition of basis, we get
   \begin{equation}
   \label{eq:27A}
   	\Violators(B_{R}^{(i)}\cup G_{R}^{(j-1)}) = V_{R}^{(j)} = \Violators(B_{R}^{(j)}).
   \end{equation}
   On the other  hand, since $V_{R}^{(i)} = \Violators(B_{R}^{(i)})$ and
   $B_{R}^{(i)}\subseteq B_{R}^{(i)} \cup G_{R}^{(j-1)}\subseteq G_{R}^{(i-1)}$,
   we can apply monotonicity and derive
   \begin{equation}
   \label{eq:27B}
   V_{R}^{(i)}=\Violators(B_{R}^{(i)}) = \Violators(B_{R}^{(i)} \cup G_{R}^{(j-1)}).
\end{equation}
   Note that $V(B_{R}^{(j)})\subseteq G_{R}^{(i-1)}$, because $G$ always contains the
   violators from previous rounds. Additionally, by equations (\ref{eq:27A}) and (\ref{eq:27B}) we have 
that
   $V_{R}^{(i)} = \Violators(B_{R}^{(i)}\cup G_R^{(j-1)}) = \Violators(B_{R}^{(j)})$.
   Thus, we can build a contradiction of consistency,
   \[
   	G_{R}^{(i-1)} \cap V_{R}^{(i)} \supseteq \Violators(B_{R}^{(j)}) \cap V_{R}^{(i)} = 
\Violators(B_{R}^{(j)}) \neq \emptyset.
   \]
   The last inequality holds because $j$ is not the last round.
    \end{proof}

    The following lemma is the crucial result that lets us interpret the development of the set $G$ in
    Algorithm \ref{alg:stageI} as a violator space itself.
    \begin{lemma}
    \label{lemma:54}
    Let $(H, \Violators)$ be a violator space of combinatorial dimension $d$. For any subset $R
\subseteq H$ define
    \[
    	\Gamma(R):= (V_{R}^{(1)}, \ldots, V_{R}^{(d)}).
    \]
    Using this we can define a new violator mapping as follows,
    \[
    	\Violators'(R):= \{ h\in H\backslash R \;|\; \Gamma(R)\neq \Gamma(R\cup \{h\} )\}.
    \]
    Then the following statements are true:
    \begin{enumerate}
    \item[\emph{(i)}] $(H, \Violators')$ is a violator space of combinatorial dimension at most ${d
+1}\choose {2}$.
    \item[\emph{(ii)}] The set $\Violators'(R)$ is given by
    \[
    	\Violators'(R) = V_{R}^{(1)} \cup \ldots \cup V_{R}^{(d)} = G_{R}^{(d)} \backslash R.
    \]
    \item[\emph{(iii)}] If $(H,\Violators)$ is nondegenerate, then so is $(H, \Violators')$.
    \end{enumerate}
    \end{lemma}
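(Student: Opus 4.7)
The plan is to establish part (ii) first, since the explicit description $\Violators'(R) = V_R^{(1)} \cup \ldots \cup V_R^{(d)} = G_R^{(d)} \setminus R$ underlies everything else. I would prove by induction on $i$ that if $h \in H \setminus R$ satisfies $h \notin \bigcup_j V_R^{(j)}$, then $G_{R \cup \{h\}}^{(i)} = G_R^{(i)} \cup \{h\}$ and $V_{R \cup \{h\}}^{(i)} = V_R^{(i)}$: the inductive step applies locality of $\Violators$ to $G_R^{(i-1)} \subseteq G_R^{(i-1)} \cup \{h\}$, using that $h \notin V_R^{(i)} = \Violators(G_R^{(i-1)})$ together with consistency. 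Conversely, if $i$ is the smallest index with $h \in V_R^{(i)}$, the same induction through step $i-1$ gives $G_{R \cup \{h\}}^{(i-1)} = G_R^{(i-1)} \cup \{h\}$; consistency then forces $h \notin V_{R \cup \{h\}}^{(i)}$ while $h \in V_R^{(i)}$, so $\Gamma(R) \neq \Gamma(R \cup \{h\})$. The identity $V_R^{(1)} \cup \ldots \cup V_R^{(d)} = G_R^{(d)} \setminus R$ follows from the recurrence $G_R^{(i)} = G_R^{(i-1)} \cup V_R^{(i)}$ and pairwise disjointness of the $V_R^{(j)}$ (a direct consequence of consistency).

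Part (i) splits into three pieces. Consistency of $\Violators'$ is immediate from the definition. For locality, suppose $F \subseteq G$ with $G \cap \Violators'(F) = \emptyset$; I would inductively show $V_G^{(i)} = V_F^{(i)}$, using that $V_F^{(i)} \subseteq \Violators'(F)$ is disjoint from $G$, so the pair $G_F^{(i-1)} \subseteq G_F^{(i-1)} \cup G = G_G^{(i-1)}$ (by the inductive hypothesis) satisfies the hypothesis of locality for $\Violators$. Taking unions and invoking (ii) then yields $\Violators'(G) = \Violators'(F)$. For the dimension bound, define $R^{*} := \bigcup_{i=1}^{d} (B_R^{(i)} \cap R)$. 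Since $B_R^{(i)} \subseteq G_R^{(i-1)}$, which is the disjoint union of $R$ and the earlier $V_R^{(j)}$, and since Lemma \ref{lemma:53} guarantees a witness of $B_R^{(i)}$ in each of $V_R^{(1)}, \ldots, V_R^{(i-1)}$, we get $|B_R^{(i)} \cap R| \leq d - (i-1)$, hence $|R^{*}| \leq \binom{d+1}{2}$. A short induction then shows $V_{R^{*}}^{(i)} = V_R^{(i)}$ for all $i$, because $B_R^{(i)} \subseteq G_{R^{*}}^{(i-1)} \subseteq G_R^{(i-1)}$ and locality gives $\Violators(G_{R^{*}}^{(i-1)}) = \Violators(B_R^{(i)}) = V_R^{(i)}$. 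Part (ii) then gives $\Violators'(R^{*}) = \Violators'(R)$, so every basis of $R$ in $(H, \Violators')$ has size at most $|R^{*}| \leq \binom{d+1}{2}$.

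For part (iii), first observe that $\dim(H, \Violators') \geq d$: any $(H, \Violators)$-basis $B$ is a basis (in the global sense) for $\Violators'$, since $\Violators(F) \subseteq \Violators'(F)$ for every $F$. Let $R \subseteq H$ with $|R| \geq \dim(H, \Violators')$, and let $B' \subseteq R$ be any basis of $R$ in $(H, \Violators')$. Using $\Violators'(B') = \Violators'(R)$, an induction dual to the locality step of part (i) proves $V_{B'}^{(i)} = V_R^{(i)}$ for all $i$. Then the basis $B_{B'}^{(i)}$ of $G_{B'}^{(i-1)}$ satisfies $B_{B'}^{(i)} \subseteq G_{B'}^{(i-1)} \subseteq G_R^{(i-1)}$ and $\Violators(B_{B'}^{(i)}) = V_R^{(i)} = \Violators(G_R^{(i-1)})$; as it is a basis in the global sense, it is also a basis of $G_R^{(i-1)}$. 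Since $|G_R^{(i-1)}| \geq |R| \geq d$ and $(H, \Violators)$ is nondegenerate, $B_{B'}^{(i)} = B_R^{(i)}$. Intersecting with $R$ gives $B_R^{(i)} \cap R \subseteq B'$, so $R^{*} \subseteq B'$; combined with $\Violators'(R^{*}) = \Violators'(R) = \Violators'(B')$ and the minimality of $B'$, this forces $B' = R^{*}$. As the $B_R^{(i)}$ are uniquely determined by nondegeneracy, so is $R^{*}$, and hence so is the basis of $R$.

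The main obstacle I expect is the chain of identifications in part (iii): one must carefully match the algorithmic trajectories on $B'$ and on $R$, then leverage uniqueness of bases in $(H, \Violators)$ to pin down each $B_R^{(i)}$, all the while verifying that the cardinality condition $|G_R^{(i-1)}| \geq d$ holds. Parts (i) and (ii) are essentially repeated applications of locality and consistency, but in (iii) locality, part (ii), and the explicit structure of $R^{*}$ all need to interlock correctly.
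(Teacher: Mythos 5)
Your proposal is correct and follows essentially the same route as the paper: the same trajectory-matching induction (the paper's Claim \ref{claim:1}, specialized to single elements), the same set $R^{*}=R\cap\bigcup_{i=1}^{d}B_{R}^{(i)}$ bounded via Lemma \ref{lemma:53}, and the same identification of the bases $B_{R}^{(i)}$ via nondegeneracy in part (iii), just reordered so that (ii) comes first and is reused for locality. The one inference to tighten is ``so every basis of $R$ has size at most $|R^{*}|$'': a basis is only \emph{minimal}, not minimum, so you should apply your construction to an arbitrary basis $B$ itself, obtaining $B^{*}\subseteq B$ with $\Violators'(B^{*})=\Violators'(B)$ and hence $B=B^{*}$ by minimality, which gives $|B|\leq\binom{d+1}{2}$ --- a step the paper is equally terse about, since it justifies it by the uniqueness shown in (iii), which covers only the nondegenerate case.
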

    
    In order to prove Lemma \ref{lemma:54} we first need an auxiliary claim. Note that $\dot{\cup}$
    denotes disjoint union.
    \begin{claim}
    \label{claim:1}
   Let $Q$ be any set with $Q = R\; \dot{\cup}\; T \subseteq H$ and $i < d$. If
   \begin{eqnarray*}
   	V_{Q}^{(j+1)} = V_{R}^{(j+1)} ,&& j\leq i,
   \end{eqnarray*}
   then
   \begin{eqnarray*}
   	G_{Q}^{(j)} = G_{R}^{(j)} \; \dot{\cup} \; T,& & j\leq i+1.
   \end{eqnarray*}
    \end{claim}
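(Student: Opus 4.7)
The plan is to prove the claim by induction on $j$, running from $j=0$ up to $j=i+1$. The base case $j=0$ is immediate from the definitions: $G_Q^{(0)} = Q = R\;\dot\cup\;T = G_R^{(0)}\;\dot\cup\;T$. Nothing else is needed here, since the initial working set in the algorithm is exactly the input sample.

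For the inductive step, I would assume $G_Q^{(j)} = G_R^{(j)}\;\dot\cup\;T$ for some $j \leq i$ and derive the corresponding equality for $j+1$. From the algorithm's update rule, $G_R^{(j+1)} = G_R^{(j)} \cup V_R^{(j+1)}$ and $G_Q^{(j+1)} = G_Q^{(j)} \cup V_Q^{(j+1)}$. The hypothesis of the claim directly furnishes $V_Q^{(j+1)} = V_R^{(j+1)}$ (since $j\leq i$), so substituting the inductive hypothesis yields
\[
    G_Q^{(j+1)} \;=\; (G_R^{(j)}\;\dot\cup\;T)\cup V_R^{(j+1)} \;=\; G_R^{(j+1)}\cup T.
\]

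The only thing to verify is that this last union is disjoint, i.e.\ that $T\cap V_R^{(j+1)}=\emptyset$. This is the one place where the violator space axioms enter: since $B_Q^{(j+1)}$ is a basis of $G_Q^{(j)}$, we have $V_Q^{(j+1)} = \Violators(B_Q^{(j+1)}) = \Violators(G_Q^{(j)})$, so consistency gives $G_Q^{(j)}\cap V_Q^{(j+1)} = \emptyset$. Because $T\subseteq G_Q^{(j)}$ and $V_R^{(j+1)} = V_Q^{(j+1)}$ by assumption, it follows that $T\cap V_R^{(j+1)} = \emptyset$, completing the step.

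I do not foresee a real obstacle here: the claim is essentially a bookkeeping statement, saying that if the two parallel executions of the repeat loop produce the same sequences of violators, then the working sets stay locked to one another up to the ``silent'' set $T$. The only subtlety worth flagging in the write-up is that a basis need not be unique, so one should note that $V_R^{(j+1)}$ and $V_Q^{(j+1)}$ are well defined as $\Violators(G_R^{(j)})$ and $\Violators(G_Q^{(j)})$ respectively, independently of the basis chosen by \stageIII; after that observation, the induction is mechanical.
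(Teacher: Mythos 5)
Your proof is correct and follows essentially the same route as the paper's: an induction in which the base case is the definitional identity $G_Q^{(0)}=Q=R\,\dot\cup\,T$ and the step substitutes $V_Q^{(j+1)}=V_R^{(j+1)}$ into the update rule $G^{(j+1)}=G^{(j)}\cup V^{(j+1)}$. Your explicit check that $T\cap V_R^{(j+1)}=\emptyset$ via consistency (and the remark that $V^{(j+1)}=\Violators(G^{(j)})$ is basis-independent) is a detail the paper leaves implicit, but it is the right justification and does not change the argument.
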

    \begin{proof}[Proof of Claim \ref{claim:1}]
    We prove the claim by induction on $i$. First, if $i=0$ the precondition reads $
    \Violators(Q) = \Violators(R)$. It follows that
    $G_{Q}^{(1)} = Q \cup \Violators(Q) = (R\;\dot{\cup}\; T) \cup \Violators(R) = G_{R}^{(1)}\;\dot{\cup}\; 
T$.

    Suppose the claim is true for $j\leq i$. From $V_{Q}^{(i+1)} = V_{R}^{(i+1)}$ we can deduce
    \[
    	G_{Q}^{(i+1)} = G_{Q}^{(i)} \cup V_{Q}^{(i+1)} = (G_{R}^{(i)} \;\dot{\cup}\; T) \cup V_{R}^{(i+1)} = 
G_{R}^{(i+1)} \;\dot{\cup}\; T.
    \]
    \end{proof}
    
    Before we proceed to the proof of Lemma \ref{lemma:54} let us first state the consequences, which
    we obtain by applying Lemma \ref{lemma:sampling} to the violator space that we constructed.
    
    \begin{theorem}[Theorem 5.5 of \cite{GWSampl01}]
    \label{theorem:sizeofG}
     For $R\subseteq H$ with $|H| = n$,  and a random sample of size $r$,
     \[
     	\Exp[|G_{R}^{(d)}|] \leq {{d+1}\choose{2}} \frac{n  - r}{r+1} + r.
     \]
     Choosing $r=d \sqrt{n/2}$ yields
     \[
     	\Exp[|G_{R}^{(d)}|] \leq 2(d+1)\sqrt{\frac{n}{2}}.
     \]
    \end{theorem}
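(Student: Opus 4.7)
The plan is to identify $|G_R^{(d)}|$ with $r + |\Violators'(R)|$ and then bound the expectation of $|\Violators'(R)|$ by applying the Corollary of the Sampling Lemma to the derived violator space $(H,\Violators')$ supplied by Lemma \ref{lemma:54}.

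Concretely, by Lemma \ref{lemma:54}(ii) we have the disjoint decomposition $G_R^{(d)} = R\,\dot{\cup}\,\Violators'(R)$, so $|G_R^{(d)}| = r + |\Violators'(R)|$ holds pointwise for every choice of $R$. Lemma \ref{lemma:54}(i) tells us that $(H,\Violators')$ is itself a violator space, of combinatorial dimension at most $\binom{d+1}{2}$. Hence the Corollary of Lemma \ref{lemma:sampling}, applied to $(H,\Violators')$ and a uniformly random $R$ of size $r$, immediately yields
\[
\Exp\bigl[|\Violators'(R)|\bigr] \le \binom{d+1}{2}\,\frac{n-r}{r+1}.
\]
Taking expectations of $|G_R^{(d)}| = r + |\Violators'(R)|$ and plugging this in produces the first displayed inequality.

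For the second bound, I would substitute $r = d\sqrt{n/2}$ and use the crude estimate $\frac{n-r}{r+1} \le \frac{n}{r}$. The first summand then simplifies to
\[
\binom{d+1}{2}\cdot\frac{n}{d\sqrt{n/2}} \;=\; \frac{d+1}{2}\cdot\sqrt{2n} \;=\; (d+1)\sqrt{n/2},
\]
using $\sqrt{2n} = 2\sqrt{n/2}$. Adding the remaining $r = d\sqrt{n/2}$ gives a total of $(2d+1)\sqrt{n/2} \le 2(d+1)\sqrt{n/2}$, as claimed.

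There is no substantive obstacle remaining: all the conceptual content has been absorbed into Lemma \ref{lemma:54}, after which the theorem reduces to a one-line application of the Sampling Lemma together with an elementary arithmetic estimate for the chosen value of $r$.
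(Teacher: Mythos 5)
Your proposal is correct and follows essentially the same route as the paper: the first inequality is obtained by applying the Sampling Lemma (via its corollary) to the derived violator space $(H,\Violators')$ of combinatorial dimension at most $\binom{d+1}{2}$ and using part (ii) of Lemma \ref{lemma:54} to identify $|G_{R}^{(d)}|$ with $r+|\Violators'(R)|$, and the second follows by substituting $r=d\sqrt{n/2}$. Your explicit arithmetic, including the bound $(2d+1)\sqrt{n/2}\leq 2(d+1)\sqrt{n/2}$, just spells out what the paper leaves as ``plugging in the value for $r$.''
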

    \begin{proof}[Proof of Theorem \ref{theorem:sizeofG}]
    The first inequality directly follows from the sampling lemma (Lemma \ref{lemma:sampling}),  
applied
    to the violator space $(H, \Violators')$, together with part $(ii)$ of Lemma \ref{lemma:54}. The 
second
    inequality follows from plugging in the value for $r$.
    \end{proof}
    
    Let us now come back to the Lemma.
    \begin{proof}[Proof of Lemma \ref{lemma:54}]
    ~\newline
     \textbf{Proof of (i).}
    We first need to check consistency and locality as defined in Definition \ref{def:violatorspace}.
    
    Consistency is easy, by the definition of $\Violators'$. Since the violators of $R\subseteq H$ are 
chosen
    from $H\backslash R$ exclusively, we can be sure that $R\cap \Violators'(R)=\emptyset$ for all $R
$.
    
    Let us recall what locality means. For sets $R\subseteq Q \subseteq H$, if $Q\cap\Violators'(R) = 
\emptyset$, then
    $\Violators'(Q) = \Violators'(R)$. This we are going to prove by induction on the size of $Q
\backslash R$.
    If $|Q\backslash R| = 0$, then the two sets are the same, and locality is obviously fulfilled. Now, 
suppose
    that $|Q\backslash R| = i$ and locality is true for any smaller value $j<i$. Consider some set $S$ 
fulfilling
    $R\subseteq S \subset Q$ and $Q=S\;\dot{\cup}\;\{q\}$. 
    First note that, if $Q\cap \Violators'(R) = \emptyset$, then also $S\cap \Violators'(R)= \emptyset$.
    Therefore, the precondition for the induction hypothesis is fulfilled, and we can conclude that
    $\Violators'(R) = \Violators'(S)$.  Bearing this in mind, we can make the following derivation:
    \[
    \begin{array}{lll}
    Q \cap \Violators'(R) = \emptyset &
    \Rightarrow &  Q \cap \Violators'(S) = \emptyset \\
    & \stackrel{q\in Q}{\Rightarrow} & q\not\in \Violators'(S) \\
    & \stackrel{\mbox{\tiny Def. (\ref{def:violator})}}{\Rightarrow} & \Gamma(S) = \Gamma(S\;\dot{\cup}\; \{q\}) = \Gamma(Q)\\
    & \stackrel{\mbox{\tiny Def. (\ref{def:violator})}}{\Rightarrow} & \Violators'(S) = \Violators'(Q) \\
    & \Rightarrow & \Violators'(R) = \Violators'(Q).
    \end{array}
    \]
    That shows the locality of the violator space $(H, \Violators')$.
    
    We still have to show that $(H, \Violators')$  has combinatorial dimension at most ${d+1 \choose 
2}$.
    To this end we prove that $\Violators'(B_{R}) = \Violators'(R)$, where
    \[
    	B_{R}:= R\cap \bigcup_{i=1}^{d} B_{R}^{(i)}.
    \]
     Note that $B_{R}$, as we will show in {(iii)}, is in fact the unique basis of the set $R\subseteq H$.
     By bounding the size of $B_R$ we therefore bound the combinatorial dimension of
    $(H, \Violators')$. 
     Equivalent to  $\Violators'(B_{R}) = \Violators'(R)$ we show that $V_{B_{R}}^{(j)} = V_{R}^{(j)}$, for $1\leq j \leq d$, using induction on $j
$. For
    $j=1$ we get
    \[
    	\Violators(R) = \Violators(B_{R}\cup (R\backslash B_{R})) = \Violators(B_{R}),
    \]
    because $R\backslash B_{R}$ is disjoint from $B_{R}^{(1)}$, the basis of $R$. Therefore,
    $R\backslash B_{R}=R\,\backslash  \bigcup_{i=1}^{d} B_{R}^{(i)}$
    can be removed from $R$ without changing the set of violators.
    
    Now assume that the statement holds for $j\leq d-1$ and consider the case $j=d$. By Claim 
    \ref{claim:1}, we get $G_{R}^{(j-1)} = G_{B_{R}}^{(j-1)} \;\dot{\cup}\; (R\backslash B_{R})$.
    Since $R\backslash B_{R}$ is disjoint from the basis $B_{R}^{(j)}$ of $G_{R}^{(j-1)}$ it follows that
    \[
    	V_{R}^{(j)} = \Violators(G_{R}^{(j-1)}) = \Violators(G_{B_{R}}^{(j-1)} \;\dot{\cup}\; (R\backslash 
	B_{R})) =	\Violators(G_{B_{R}}^{(j-1)}) = V_{B_{R}}^{(j)}.
    \]
    
    To bound the size of $B_{R}$, we observe that
    \[
    	|R\cap B_{R}^{(i)}| \leq d+1-i,
    \]
    for all $i\leq \ell$ (the number of rounds in which $\Violators(B)\neq\emptyset$). This follows
    from Lemma \ref{lemma:53}. $B_{R}^{(i)}$ has at least one element in each of the $i-1$ sets
    $V_{R}^{1},\ldots, V_{R}^{(i-1)}$, which are in turn disjoint from $R$. Hence we get
    \[
    	|B_{R}| \leq \sum_{i=1}^{\ell} |R\cap B_{R}^{(i)}| \leq {d+1 \choose 2}.
    \]
    
    ~\newline
     \textbf{Proof of (ii).}
     We show that if some constraint $q\in H$ is in $\Violators'(R)$ then it is also in $V_R^{(i)}$ for
     some $1\leq i \leq d$. On the other hand if $q\not\in \Violators'(R)$ then $q$ is not in
     any of the $V_{R}^{(i)}$, $1\leq i \leq d$. This proves the statement of $(ii)$.
     
     Assume $q\in\Violators'(R)$ and let $Q:=R\cup \{q\}$. Consider the largest index
     $i< d-1$ such that
     \[
     	V_{R}^{(j+1)} = V_{Q}^{(j+1)},\; j\leq i.
     \]
     Note that such an index $i$ must exist, because $\Violators'(R) \neq \Violators'(Q)$, which
     simply follows from $q\in \Violators'(R)$ and $q\not\in\Violators'(Q)$.
     Then, from Claim \ref{claim:1} it follows that $G_{Q}^{(i+1)}= G_{R}^{(i+1)} \;\dot{\cup}\; \{q\}$, and
     by assumption on $i$ we know that $V_{R}^{(i+2)} \neq V_{Q}^{(i+2)}$. Therefore, by the contrapositive of locality,
     we conclude $(G_{R}^{(i+1)} \;\dot{\cup}\; \{q\}) \cap \Violators(G_{R}^{{(i+1)}})\neq \emptyset$.
     This means that $q\in \Violators(G_{R}^{{(i+1)}})= V_{R}^{(i+2)}$, because otherwise the 
consistency of $G_{R}^{(i+1)}$
     would be violated.
     
     On the other hand, if $q\not\in \Violators'(R)$, then $\Violators'(R) = \Violators'(Q)$, or equivalently
     $V_{R}^{(i)} = V_{Q}^{(i)}$, for $1\leq i \leq d$.
    However, because $(H, \Violators)$ is consistent it follows that $q\not\in V_{Q}^{(i)}$, and
    therefore $q\not\in V_{R}^{(i)}$,  for $1\leq i\leq d$.
    
    ~\newline
     \textbf{Proof of (iii).}
     Nondegeneracy of $(H,\Violators')$ follows if we can show that every set $R\subseteq H$
     has the set $B_{R}$ as its unique basis. To this end we prove that whenever we have
     $L\subseteq R$ with $\Violators'(L) = \Violators'(R)$, then $B_{R}\subseteq L$.
     
     Fix $L\subseteq R$ with $\Violators'(L) = \Violators'(R)$, i.e.,
     \[
     	 V_{R}^{(i)} = V_{L}^{(i)}, \; 1\leq i\leq d.
     \]
     Claim \ref{claim:1} then implies
     \[
     	G_{R}^{(i)} = G_{L}^{(i)} \;\dot{\cup}\; (R\backslash L), \; 0\leq i \leq d,
     \]
     and the nondegeneracy of $(H, \Violators)$ yields that $G_{R}^{(i)}$ and
     $G_{L}^{(i)}$ have the same unique basis $B_{R}^{(i+1)}$, for all $0\leq i \leq d$.
     Note that $B_{R}^{(i+1)}$ is indeed contained in $G_{L}^{(i)}$, because 
     $\Violators(G_L^{(i)})=\Violators(G_R^{(i)})=\Violators(G_L^{(i)}\dot{\cup}(R\backslash L))=
     \Violators(B_{R}^{(i+1)})$ for $0\leq i \leq d$. That means, if there exists a basis of $G_L^{(i)}$,
     that by definition would also be a basis
     of $G_R^{(i)}$, but distinct from $B_{R}^{(i+1)}$, nondgeneracy is violated.
     
     It follows that $G_{L}^{(d-1)}$ contains
     \[
     	\bigcup_{i=1}^{d} B_{R}^{(i)},
     \]
     so $L$ contains
     \[
     	L\cap \bigcup_{i=1}^{d} B_{R}^{(i)} = R\cap \bigcup_{i=1}^{d} B_{R}^{(i)}.
     \]
     The latter equality holds because $R\backslash L$ is disjoint from $G_{L}^{(d)}$,
     thus in particular from the union of the $B_{R}^{(i)}$.
     \end{proof}
    
    \begin{theorem}
    \label{theorem:runningtime1}
    Let  $(H,\Violators)$ be a violator space of combinatorial dimension $d$, and  $n= |H| $.
Then the algorithm \stageI~computes a basis of $(H,\Violators)$ with
 at most $d+1$ calls to \stageII, with an expected number of 
at most $O(d\sqrt{n})$
constraints each.
    \end{theorem}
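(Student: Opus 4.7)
The plan is to prove the two claims of the theorem independently: the bound of $d+1$ on the number of rounds, and the bound of $O(d\sqrt{n})$ on the expected size of the input to each \stageII\ call.

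The round-count bound I would derive from Lemma \ref{lemma:53} combined with a pairwise-disjointness observation. First I would note that the per-round violator sets $V_R^{(1)}, V_R^{(2)}, \ldots, V_R^{(\ell)}$ are pairwise disjoint: consistency gives $V_R^{(j)}\cap G_R^{(j-1)}=\emptyset$, and since $G$ monotonically absorbs previous violators, $V_R^{(k)}\subseteq G_R^{(k)} \subseteq G_R^{(j-1)}$ for every $k<j$, forcing $V_R^{(k)}\cap V_R^{(j)}=\emptyset$. Lemma \ref{lemma:53} then places at least one distinct element of $B_R^{(i)}$ in each of $V_R^{(1)}, \ldots, V_R^{(i-1)}$, so $|B_R^{(i)}| \geq i-1$. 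Combined with the a priori upper bound $|B_R^{(i)}| \leq d$, this yields $i \leq d+1$, and in particular $\ell \leq d+1$, giving at most $d+1$ calls to \stageII.

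For the size bound, I would observe that the \stageII\ call in round $i$ is given the set $G_R^{(i-1)}$, whose size is non-decreasing in $i$. By the round-count bound $i-1 \leq d$, so it suffices to control $\Exp[|G_R^{(d)}|]$, and this is exactly what Theorem \ref{theorem:sizeofG} provides for the chosen sample size $r = d\sqrt{n/2}$, namely $\Exp[|G_R^{(d)}|] \leq 2(d+1)\sqrt{n/2} = O(d\sqrt{n})$.

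The conceptually delicate work has already been carried out in Lemma \ref{lemma:54}, where the iterative execution of \stageI\ is repackaged as a single violator space $(H, \Violators')$ of combinatorial dimension at most $\binom{d+1}{2}$, allowing the Sampling Lemma to be invoked. Given Lemma \ref{lemma:54} and its consequence Theorem \ref{theorem:sizeofG}, the theorem becomes a short two-step assembly; the only non-trivial ingredient that remains is the disjointness-and-counting argument in the first part, and that itself rests almost entirely on Lemma \ref{lemma:53}.
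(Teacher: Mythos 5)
Your proposal is correct and takes essentially the same approach as the paper: the expected-size bound is lifted verbatim from Theorem \ref{theorem:sizeofG} with $r=d\sqrt{n/2}$, and the round bound rests on the same counting idea the paper invokes via Lemma \ref{lemma:53} / Lemma \ref{lemma:53b}. Your variant (pairwise disjointness of the $V_R^{(j)}$ plus Lemma \ref{lemma:53} to force $|B_R^{(i)}|\geq i-1\,$, hence $\ell\leq d+1$) is just a slightly different packaging of the paper's ``each controversial round adds a new element of a basis of $H$ to $G$'' argument, and it is sound.
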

    \begin{proof}
    According to Lemma \ref{lemma:53} (and maybe more intuitively according to Lemma \ref{lemma:53b}),
    in every round except the last one we add at least one element of any basis of $(H,\Violators)$ to $G$.
    Since the size of the basis is bounded by $d$ we get that the number of rounds is at most $d+1$. Furthermore,
    according to Theorem \ref{theorem:sizeofG}, and our choice $r=d\sqrt{n/2}$, the expected size of $G$ will not
    exceed $2(d+1)\sqrt{{n}/{2}}$ in any round.
    \end{proof}

%%%%%%%%%%%%%%%%%%%%%%%%
\subsection{The Swiss Algorithm (\stageII)}
\label{subsec:secondstage}

The algorithm \stageII~proceeds similar as the first one.
Let the input be a violator space $(H,\Violators)$, $|H|=n$, and $\dim(H,\Violators)=d$.

First, let us 
(re)introduce the notation $R^{(i)}$, $B^{(i)}$, and $V^{(i)}$ for $i\geq 1$, similar as in 
Definition \ref{def:intermediatesets}, for the sets $R$, $B$ and $\Violators(R)$ of round $i$
respectively. The set $B^{(i)}$ is a basis of $R^{(i)}$ and $V^{(i)}=\Violators(R^{(i)})=
\Violators(B^{(i)})$.
Since we draw a random sample in every round it does not make sense to index the sets
$B^{(i)}$ and $V^{(i)}$ by $R$, so we drop this subscript.

After the initialization we enter the first round and choose a random sample $R^{(1)}$ of size
$r=2 d^2$ uniformly at random from $H$. Then we compute an intermediate basis $B^{(1)}$
of the violator space  $(R^{(1)},\left.\Violators\right|_{R^{(1)}})$ by using \stageIII~as
a black box. 
In the next step we compute the set of violated constraints, i.e., $V^{(1)}$.
So far, it is the same thing as the first stage. But now, instead of enforcing the violated
constraints by adding them to the active set, we increase the probability that the violated
constraints are chosen in the next round. This is achieved by means of the multiplicity
or weight variable $\mu$.

    \begin{definition}
\label{def:intermediatemultiplicities}
With every $h\in H$ we associate the \emph{multiplicity} $\mu_h\in\mathbb{N}$. For an arbitrary
set $F\subseteq H$ we define the \emph{cumulative multiplicity} as
\[
	\mu(F):= \sum_{h\in F} \mu_h.
\]
For the analysis we also need to keep track of this value across different iterations
of the algorithm. For $i\geq 0$ we will use $\mu^{(i)}_h$ (and
$\mu^{(i)}(F)$) to denote the (cumulative) multiplicity at the end of round $i$.
We define $\mu_h^{(0)}:= 1$ for any $h \in H$,
and therefore $\mu^{(0)}(F)=|F|$.
\end{definition}

Now back to the algorithm. To increase the probability that a constraint $h\in V^{(i)}$ is chosen in the
random sample of round $i+1$ we double the multiplicity of $h$,
i.e., $\mu^{(i)}_h = 2\mu^{(i-1)}_h$.

The multiplicities determine
how the random sample $R^{(i+1)}$ is chosen.
To this end we construct a multiset
$\hat{H}^{(i+1)}$ to which we add $\mu^{(i)}_h$ copies of every element $h\in H$.
To simplify notation, let us for a moment fix the round $i+1$ and drop the
corresponding superscript.

We define the function $\phi: 2^H \rightarrow 2^{\hat{H}}$ as
the function that maps a set of elements from $H$ to the set of corresponding elements in $\hat{H}$, 
i.e., for
$F\subseteq H$ 
\[
	\phi(F):= \bigcup_{h\in F}\{h_1,\ldots,h_{\mu_h}\},
\] 
where the $h_j$, $1\leq j \leq \mu_h$, are the distinct copies of $h$. For example, $\hat{H} = \phi(H)$.
Conversely, let $\psi: 2^{\hat{H}}\rightarrow 2^{H}$ be the  function that collapses a given subset
of $\hat{H}$ to their original elements in $H$, i.e., for $\hat{F}\subseteq \hat{H}$,
\[
	\psi(\hat{F}):= \{h\in H \;|\; \phi(\{h\})\cap \hat{F}\neq\emptyset\}.
\]

Reintroducing the superscript $i+1$ we can simply say that we construct $\hat{H}^{(i+1)}=\phi(H)$
using the multiplicities from round $i$.
The sample $\hat{R}^{(i+1)}$ is then chosen u.a.r. from the $r$-subsets of
$\hat{H}^{(i+1)}$. In the following the multiset property will not be important any more and we
can discard multiple entries to obtain $R^{(i+1)}=\psi(\hat{R}^{(i+1)})$. Note that $1\leq |R^{(i+1)}|\leq 
r$.
Then we continue as in round $1$. Note that in the first round this is in fact equivalent
to choosing an $r$-subset u.a.r. from $H$, because $\mu^{(0)}_h=1$ for all $h\in H$.

The algorithm terminates as soon as $V^{(\ell)}=\emptyset$ for some round $\ell\geq 1$ and
returns the basis $B^{(\ell)}$.

{
\parskip2em
\begin{algorithm}[H]
    \SetKwInOut{Input}{input}
	\SetKwInOut{Output}{output}
	\Input{Violator space $(H,\Violators)$, $|H|=n$, and $\dim(H,\Violators)=d$}
	\Output{A basis $B$ of $(H,\Violators)$}
	\Indp
	\SetInd{0.5cm}{0em}
	\BlankLine
			$\mu_h \leftarrow 1$ for all $h\in H$\;
			$r \leftarrow 2 d^2$\;
			\Repeat{$V(B)=\emptyset$}{
				choose random $R$ from $H$ according to $\mu$\;
				$B \leftarrow$ \stageIII($R, \left.\Violators\right|_{R}$)\;
				$\mu_h \leftarrow 2\mu_h$ for all $h\in \Violators(B)$\;
			}
		\Return{$B$}
	\BlankLine
    \caption{\stageII($H,\Violators$)\label{stageII}}
    \end{algorithm}

Let us first discuss an auxiliary lemma very similar in flavour to Lemma \ref{lemma:53}.
}

\begin{lemma}[Observation 22,  \cite{journals/dam/GartnerMRS08}]
\label{lemma:53b}
Let $(H, \Violators)$ be a violator space, $F\subseteq G \subseteq H$, and $G\cap\Violators(F)\neq 
\emptyset$.
Then $G\cap \Violators(F)$ contains at least one element from every basis of $G$.
\end{lemma}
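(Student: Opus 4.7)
The plan is to prove the contrapositive by contradiction: fix any basis $B$ of $G$ and suppose, for the sake of contradiction, that $B\cap \Violators(F)=\emptyset$. The goal is then to derive $\Violators(F)=\Violators(G)$, which together with consistency $G\cap\Violators(G)=\emptyset$ contradicts the hypothesis $G\cap\Violators(F)\neq\emptyset$.

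First I would look at the intermediate set $F\cup B$, which sits between $F$ and $G$. By consistency we have $F\cap\Violators(F)=\emptyset$, and by the assumption for contradiction $B\cap\Violators(F)=\emptyset$, so
\[
  (F\cup B)\cap \Violators(F) = \emptyset.
\]
Applying locality to $F\subseteq F\cup B$ then gives $\Violators(F\cup B)=\Violators(F)$.

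Next I would apply locality a second time in the other direction, i.e.\ to $B\subseteq F\cup B$. Because $B$ is a basis of $G$, we have $\Violators(B)=\Violators(G)$, and consistency of $G$ yields $G\cap\Violators(G)=\emptyset$. Since $F\cup B\subseteq G$, this forces $(F\cup B)\cap\Violators(B)=\emptyset$, and locality now gives $\Violators(F\cup B)=\Violators(B)=\Violators(G)$. Chaining the two equalities shows $\Violators(F)=\Violators(G)$, and applying consistency of $G$ one more time, $G\cap \Violators(F)=G\cap\Violators(G)=\emptyset$, contradicting the assumption of the lemma. Hence every basis $B$ of $G$ must meet $\Violators(F)$, and since $B\subseteq G$ this meeting point lies in $G\cap \Violators(F)$, as required.

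The only subtle step is recognising that locality has to be used twice with $F\cup B$ as the pivot set: once to push $\Violators(F)$ up to $\Violators(F\cup B)$, and once to pull $\Violators(G)=\Violators(B)$ down to $\Violators(F\cup B)$. Everything else is just bookkeeping with consistency and the definition of a basis, so I do not expect a substantive obstacle. (One could equivalently phrase the second step using the monotonicity lemma of Lemma~\ref{lemma:monotonicity} applied to $B\subseteq F\cup B\subseteq G$, but invoking locality directly avoids any extra machinery.)
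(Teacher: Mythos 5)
Your proof is correct and follows essentially the same route as the paper: assume $B\cap\Violators(F)=\emptyset$, use consistency to get $(B\cup F)\cap\Violators(F)=\emptyset$, pass through the set $B\cup F$ to conclude $\Violators(F)=\Violators(G)$, and contradict consistency of $G$. The only cosmetic difference is that you invoke locality a second time (with $B\subseteq F\cup B$) where the paper cites monotonicity for the step $\Violators(B\cup F)=\Violators(B)=\Violators(G)$; since monotonicity is itself an immediate consequence of locality and consistency, this is the same argument.
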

\begin{proof}
Since the proof is pretty short we repeat it here.
Let $B$ be some basis of $G$ and assume that $B\cap G \cap \Violators(F) = B\cap \Violators(F) = 
\emptyset$. From consistency we get
$F\cap \Violators(F)= \emptyset$. Together this implies
\[
	(B\cup F) \cap \Violators(F) = \emptyset.
\]
Applying locality and monotonicity, we get
\[
	\Violators(F) = \Violators(B\cup F) = \Violators(G),
\]
meaning that $G\cap\Violators(G)= G\cap\Violators(F)=\emptyset$, a contradiction.
\end{proof}

The analysis of \stageII~will show that the elements in any basis $B$ of
$H$ will increase their multiplicity so quickly
that they are chosen with high probability after a logarithmic number
of rounds. This, of course, means that the algorithm will terminate, because there will be no violators.
Formally, we will have to employ a little trick though. We will consider a modification
of \stageII~that runs forever, regardless of the current set of violators! Let us call the modified algorithm \stageIIforever. We call a particular round $i$ \emph{controversial} if
$V^{(i)}\neq\emptyset$. Furthermore, let $C_\ell$ be the event that the first  $\ell$ rounds are controversial
 in \stageIIforever.

\begin{lemma}
\label{lemma:lowerbound}
Let $(H,\Violators)$ be a violator space,  $|H|=n$, $\dim{(H,\Violators)}=d$,
$B$ any basis of $H$, and $k \in \mathbb{N}$ some positive integer.
Then, in \stageIIforever, the following holds for the expected cumulative
multiplicity of $B$ after $kd$ rounds,
\[
	2^k \Pr[C_{kd}] \leq \Exp[\mu^{(kd)}(B)] .
\]
\end{lemma}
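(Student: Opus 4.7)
The plan is to show the stronger pointwise bound $\mu^{(kd)}(B)\,\mathbf{1}_{C_{kd}}\geq 2^k\mathbf{1}_{C_{kd}}$, so that taking expectations (and using $\mu^{(kd)}(B)\geq 0$) immediately yields the claim. The two ingredients are: (a) in every controversial round at least one element of $B$ is doubled, and (b) a convexity argument to convert the aggregate doubling count into a lower bound on the cumulative multiplicity $\mu(B)$.

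For (a), I would apply Lemma \ref{lemma:53b} with $F=R^{(i)}$ and $G=H$. In a controversial round we have $V^{(i)}=\Violators(R^{(i)})\neq\emptyset$; consistency gives $\Violators(R^{(i)})\subseteq H\setminus R^{(i)}$, so $H\cap\Violators(R^{(i)})=V^{(i)}$. Lemma \ref{lemma:53b} then guarantees that $V^{(i)}$ contains at least one element from every basis of $H$, hence from $B$. In \stageIIforever{ }this means at least one multiplicity $\mu_h$ with $h\in B$ is doubled in that round.

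For (b), let $T_h$ denote the total number of rounds in $\{1,\dots,kd\}$ in which $h\in B$ is doubled; since $\mu_h^{(0)}=1$, we have $\mu_h^{(kd)}=2^{T_h}$. On the event $C_{kd}$, step (a) gives $\sum_{h\in B}T_h\geq kd$. Writing $b=|B|\leq d$ and applying AM--GM to the positive numbers $2^{T_h}$,
\[
\mu^{(kd)}(B)=\sum_{h\in B}2^{T_h}\;\geq\; b\cdot 2^{(\sum_{h\in B}T_h)/b}\;\geq\; b\cdot 2^{kd/b}\;\geq\; 2^k,
\]
where the last inequality uses $b\leq d$ (so $kd/b\geq k$) and $b\geq 1$ (the degenerate case of an empty basis forces $\Violators(H)=\emptyset$ and hence $\Pr[C_{kd}]=0$, so the conclusion is trivial).

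Combining (a) and (b) gives $\mu^{(kd)}(B)\geq 2^k$ pointwise on $C_{kd}$, and therefore
\[
\Exp\!\bigl[\mu^{(kd)}(B)\bigr]\;\geq\;\Exp\!\bigl[\mu^{(kd)}(B)\,\mathbf{1}_{C_{kd}}\bigr]\;\geq\;2^k\Pr[C_{kd}].
\]
The only subtle point is step (a), where one must verify that the sets $V^{(i)}$ produced by \stageIIforever{ }really fall under the hypothesis of Lemma \ref{lemma:53b} (via $F=R^{(i)}\subseteq H=G$ and the observation that controversial $=$ $V^{(i)}\neq\emptyset$); the AM--GM step is routine convexity. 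No independence assumptions about the random samples are needed because the inequality is pointwise on $C_{kd}$.
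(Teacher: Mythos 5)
Your proof is correct and takes essentially the same route as the paper: Lemma \ref{lemma:53b} (with $F=R^{(i)}$, $G=H$) gives at least one doubled element of $B$ in every controversial round, and the pointwise bound $\mu^{(kd)}(B)\geq 2^k$ on the event $C_{kd}$ yields the claim upon taking expectations, exactly as in the paper's conditioning step. The only difference is cosmetic: the paper closes the counting by pigeonhole (some element of $B$ is doubled at least $k$ times, since $|B|\leq d$), while you use AM--GM, which gives the same pointwise bound.
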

\begin{proof}
In any controversial round,
Lemma \ref{lemma:53b} asserts that $B\cap V^{(i)}\neq\emptyset$. So, in every controversial round, the 
multiplicity of at least one element in $B$ is doubled. Therefore, by conditioning on the event
that the first $kd$ rounds are controversial,
there must be a constraint in $B$ that has been doubled at least $k$ times (recall that 
$|B|\leq d$). It follows that 
$\Exp[\mu^{(k d)}(B)] = \Exp[\mu^{(k d)}(B)\,|\,C_{kd}]\Pr[C_{kd}] +
\Exp[\mu^{(k d)}(B)\,|\,\overline{C_{kd}}]\Pr[\overline{C_{kd}}]\geq 2^k \Pr[C_{kd}]$.
\end{proof}

\begin{lemma}
\label{lemma:upperbound}
Let $(H,\Violators)$ be a violator space, $|H|=n$, $\dim{(H,\Violators)}=d$,
$B$ any basis of $H$, and $k \in \mathbb{N}$ some positive integer.
Then, in \stageIIforever, the following holds for the expected cumulative
multiplicity of $B$ after $kd$ rounds,
\[
	\Exp[\mu^{(kd)}(B)] \leq n\left(1+\frac{d}{r}\right)^{kd}.
\]
\end{lemma}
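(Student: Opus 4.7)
The plan is to forget the specific set $B$ and instead bound the growth of the \emph{total} cumulative multiplicity $\mu^{(i)}(H)$, using that $B \subseteq H$ trivially gives $\mu^{(i)}(B) \leq \mu^{(i)}(H)$. Since $\mu^{(0)}(H) = n$, it suffices to establish the per-round multiplicative bound
\[
\Exp[\mu^{(i)}(H) \mid \mu^{(i-1)}] \;\leq\; \left(1 + \frac{d}{r}\right) \mu^{(i-1)}(H),
\]
from which iterating $kd$ times (via the tower property of conditional expectation) yields $\Exp[\mu^{(kd)}(H)] \leq n(1+d/r)^{kd}$, and the claim follows.

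For the per-round bound, the doubling rule gives $\mu^{(i)}(H) = \mu^{(i-1)}(H) + \mu^{(i-1)}(V^{(i)})$, so what really needs to be shown is $\Exp[\mu^{(i-1)}(V^{(i)}) \mid \mu^{(i-1)}] \leq (d/r)\,\mu^{(i-1)}(H)$. This has exactly the shape of the Corollary to the Sampling Lemma, but I would apply it not to $(H,\Violators)$ itself but to the lifted violator space $(\hat{H}^{(i)}, \hat{\Violators})$, where $\hat{\Violators}(\hat{R}) := \phi(\Violators(\psi(\hat{R})))$ and $|\hat{H}^{(i)}| = \mu^{(i-1)}(H)$. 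The sample $\hat{R}^{(i)}$ is uniform over the $r$-subsets of $\hat{H}^{(i)}$, and the number of violators in the lifted space equals $\mu^{(i-1)}(V^{(i)})$ by definition of $\phi$, so the Corollary yields the desired inequality provided one knows $\dim(\hat{H}^{(i)}, \hat{\Violators}) \leq d$.

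The main obstacle, and the only nontrivial step, is verifying that $(\hat{H}^{(i)}, \hat{\Violators})$ is indeed a violator space of combinatorial dimension at most $d$. Consistency and locality should lift routinely from $\Violators$ to $\hat{\Violators}$ through $\psi$ and $\phi$, using consistency of $\Violators$ to push through the ``$\hat{G}\cap\hat{\Violators}(\hat{F})=\emptyset$'' hypothesis down to its $\psi$-image. For the dimension bound, I would argue that no basis $\hat{B}$ in the lifted space can contain two copies of the same $h\in H$ (dropping one does not change $\psi(\hat{B})$, hence does not change $\hat{\Violators}(\hat{B})$, contradicting minimality), so $\psi$ is injective on $\hat{B}$; a similar minimality argument, combined with the fact that a purported violator of a strict subset of $\psi(\hat{B})$ cannot lie in $\psi(\hat{B})$ by consistency, then shows $\psi(\hat{B})$ is a basis of itself in $(H,\Violators)$, whence $|\hat{B}| = |\psi(\hat{B})| \leq d$. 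With this in hand the Corollary applies to the lifted space and the induction closes.
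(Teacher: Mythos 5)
Your proposal is correct and follows essentially the same route as the paper: bound $\mu^{(kd)}(B)$ by $\mu^{(kd)}(H)$, lift to the multiset violator space $(\hat{H}^{(i)},\hat{\Violators})$ with $\hat{\Violators}(\hat{F})=\phi(\Violators(\psi(\hat{F})))$, and apply the Sampling Lemma corollary in each round to obtain the $(1+d/r)$ growth factor. Your multiplicative per-round recursion via the tower property is just a streamlined version of the paper's additive recursion conditioned on $\mu^{(i-1)}(H)=t$, and your sketch of $\dim(\hat{H}^{(i)},\hat{\Violators})\leq d$ supplies a detail the paper dismisses with ``some thinking reveals.''
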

\begin{proof}

Let us point out first, that the following analysis goes through for
\stageIIforever{ }as well
as for \stageII, but to make it match Lemma \ref{lemma:lowerbound} we formulated it using
the former.

Note that $\Exp[\mu^{(kd)}(B)] \leq \Exp[\mu^{(kd)}(H)]$, because $B
\subseteq H$. Therefore, if we show the upper bound for the latter expectation we are done. Let  
$\ell:=k d$ be the number of rounds, and
$\Delta^{(i)}(F) := \mu^{(i)}(F) - \mu^{(i-1)}(F)$ the increase of multiplicity from one round to another, for any $i\geq1$ and
$F\subseteq H$.
We write the expected weight of $H$ after $\ell$ rounds as the sum of the initial 
weight plus the expected increase in weight in every round from $1$ to $\ell$,
\begin{equation}
\label{eq:d01}
	\Exp[\mu^{(\ell)}(H)]  =  \Exp[\mu^{(0)}(H)] + \sum_{i=1}^\ell \Exp[\Delta^{(i)}(H)].
\end{equation}
The first term is easy, $\Exp[\mu^{(0)}(H)]=n$, and the second term we write as a conditional 
expectation,
assuming that the weight in round $i-1$ was $t$,
\begin{equation}
\label{eq:d02}
	\sum_{i=1}^\ell \Exp[\Delta^{(i)}(H)] = \sum_{i=1}^{\ell} \sum_{t=0}^{\infty} 
	\Exp[\Delta^{(i)}(H)|\mu^{(i-1)}(H)=t]
	\Pr[\mu^{(i-1)}(H)=t].
\end{equation}

Now comes the crucial step. According to Lemma \ref{lemma:sampling} we can 
upper bound
$\Exp[\Delta^{(i)}(H)|\mu^{(i-1)}(H)=t]$
by interpreting it as the expected 
number of violators of a
multiset extension of $(H,\Violators)$.
To this end we construct a violator space
$(\hat{H}^{(i)},\hat{\Violators})$, where $\hat{H}^{(i)}= \phi(H)$
using the multiplicities from round $i-1$.
Let us fix round $i$ and drop the superscript for the moment. For any
$\hat{F}\subseteq \hat{H}$ we define
\[
	\hat{\Violators}(\hat{F}):= \phi(\Violators({\psi(\hat{F})})).
\]
We observe that $(\hat{H},\hat{\Violators})$ is indeed a violator space. For
$\hat{F}\subseteq \hat{H}$, consistency
is preserved, because from consistency of $(H,\Violators)$ it follows that 
$\phi(\psi(\hat{F}))\cap\phi(\Violators(\psi(\hat{F})))=\emptyset$, and knowing
$\hat{F}\subseteq\phi(\psi(\hat{F}))$, we can conclude consistency of $(\hat{H}, \hat{\Violators})$. 
Similarly,
for $\hat{F}\subseteq\hat{G}\subseteq\hat{H}$, locality of $(H,\Violators)$ tells us that if
$\phi(\psi(\hat{G}))\cap\phi(\Violators(\psi(\hat{F})))=\emptyset$ then
$\phi(\Violators(\psi(\hat{F})))=\phi(\Violators(\psi(\hat{G})))$, and knowing $\hat{G}\subseteq
\phi(\psi(\hat{G}))$,
locality of $(\hat{H},\hat{\Violators})$ follows.

The violator space we just constructed has the same ground set $\hat{H}$ by means of which we 
draw the random
sample $R$ in every round. By supplying a valid violator mapping we asserted that we can 
apply
the sampling lemma to that process.
Some thinking reveals that $d = \dim(H,\Violators)=\dim(\hat{H},\hat{\Violators})$ (even though we 
introduced
degeneracy), and we can conclude
\begin{equation}
\label{eq:d03}
	\Exp[\Delta^{(i)}(H)|\mu^{(i-1)}(H)=t] =  \Exp[|\hat{\Violators}(\hat{R}^{(i)})|] \leq d \frac{t-r}{r+1} .
\end{equation}
Therefore we get the simplified expression

\begin{center}
\begin{math}
\displaystyle
\renewcommand{\arraystretch}{2.8}
\begin{array}{lll}
	\Exp[\mu^{(\ell)}(H)] 
	& \leq & \displaystyle n + \sum_{i=1}^\ell \sum_{t=0}^{\infty} 
	d \frac{t-r}{r+1}
	\Pr[\mu^{(i-1)}(H)=t] \label{eq:d04}\\ 
	& = & \displaystyle n + \sum_{i=1}^\ell \left(\frac{d}{r+1}
	 \sum_{t=0}^{\infty}  t  \Pr[\mu^{(i-1)}(H)=t]\right.\\
	&&\displaystyle \left. - \frac{d r}{r+1}\sum_{t=0}^{\infty}
	\Pr[\mu^{(i-1)}(H)=t]\right) \label{eq:d05}\\
	& = & \displaystyle n + \frac{d}{r+1}\sum_{i=1}^\ell
	\Exp[\mu^{(i-1)}(H)] - \ell  \frac{d r}{r+1}.  \label{eq:d06}
\end{array}
\end{math}
\end{center}

The first line is derived from (\ref{eq:d01}), (\ref{eq:d02}), and (\ref{eq:d03}). The rest is routine. 
Dropping the last term  we get the following recursive equation,
\[
	\Exp[\mu^{(\ell)}(H)] \leq
	n + \frac{d}{r+1}\sum_{i=0}^{\ell-1} \Exp[\mu^{(i)}(H)],
\]
which easily resolves to the claimed bound.
\end{proof}

Using $\ell = kd$, and combining Lemmata \ref{lemma:lowerbound} and  \ref{lemma:upperbound}, 
we now know that
\[
2^{k}~\Pr[C_{\ell}] \leq n\left(1+\frac{d}{r}\right)^{\ell}.
\]

This inequality gives us a useful upper bound on $\Pr[C_\ell]$, 
because the left-hand side power grows faster than the right-hand side 
power as a function of $\ell$, given that $r$ is chosen large enough. 

Let us choose
$r=c\,d^2$ for some constant $c>\log_2e\approx 1.44$.
We obtain
\[\Pr[C_\ell] \leq n \left(1+\frac{1}{c\,d}\right)^\ell /\, 2^{k}
\leq n\, 2^{(\ell\log_2e)/(c\,d)-k},
\]
using  $1+x\leq e^x=2^{x\log_2e}$ for all $x$. This further gives us
\begin{equation}
\Pr[C_\ell] \leq n \alpha^\ell,
\label{eq:C_k_up}
\end{equation}
\[\alpha = \alpha(d,c) = 2^{(\log_2e - c)/(c\,d)} < 1.\]
This implies the following tail estimate.
\begin{lemma}
  For any $\beta>1$, the probability that \stageIIforever~
  starts with at least $\lceil\beta\log_{1/\alpha} n\rceil$ controversial 
  rounds is at most
\[
n^{1-\beta}.
\]   
\end{lemma}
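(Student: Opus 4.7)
The plan is to apply the bound $\Pr[C_\ell] \leq n\alpha^\ell$ from equation~(\ref{eq:C_k_up}) directly, setting $\ell := \lceil \beta\log_{1/\alpha} n\rceil$, and then simplify using the definition of the logarithm. The technical work has already been done in establishing~(\ref{eq:C_k_up}); what remains is a routine calculation.

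First I would observe that since $0<\alpha<1$, the function $x\mapsto \alpha^x$ is decreasing, so
\[
\alpha^{\lceil \beta\log_{1/\alpha} n\rceil} \;\leq\; \alpha^{\beta\log_{1/\alpha} n}.
\]
Next, by the definition of the logarithm to base $1/\alpha$, we have $(1/\alpha)^{\log_{1/\alpha} n} = n$, i.e.\ $\alpha^{\log_{1/\alpha} n} = 1/n$. Raising to the $\beta$-th power gives $\alpha^{\beta\log_{1/\alpha} n} = n^{-\beta}$.

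Combining these two observations with the bound from~(\ref{eq:C_k_up}), we get
\[
\Pr[C_\ell] \;\leq\; n\,\alpha^\ell \;\leq\; n\cdot n^{-\beta} \;=\; n^{1-\beta},
\]
which is the desired tail estimate.

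There is no real obstacle here; the only thing to double-check is that the ceiling is handled in the correct direction, and indeed it is, precisely because $\alpha<1$ forces the ceiling to \emph{decrease} $\alpha^\ell$ rather than increase it.
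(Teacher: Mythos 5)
Your proof is correct and follows exactly the same route as the paper: apply the bound $\Pr[C_\ell]\leq n\alpha^\ell$ from (\ref{eq:C_k_up}) with $\ell=\lceil\beta\log_{1/\alpha}n\rceil$, use $\alpha<1$ to drop the ceiling, and simplify $\alpha^{\beta\log_{1/\alpha}n}=n^{-\beta}$. Nothing to add.
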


\begin{proof}
  The probability for at least this many leading controversial rounds 
  is at most
\[
\Pr[C_{\lceil\beta\log_{1/\alpha}n\rceil}]\leq n \alpha^{\lceil\beta\log_{1/\alpha}n\rceil}
\leq n \alpha^{\beta\log_{1/\alpha}n}
= n n^{-\beta} = n^{1-\beta}.
\]
\end{proof}

We can also bound the expected number of
leading controversial rounds in \stageIIforever, and this
bounds the expected number of rounds in \stageII, because \stageII~terminates upon 
the first non-controversial round it encounters.

\begin{theorem}
\label{theorem:runningtime2}
Let $(H,\Violators)$ be a violator space, $|H|=n$, and $\dim{(H,\Violators)}=d$.
Then the algorithm \stageII~computes a basis of $H$ with
 an expected number of at most $O(d \ln n)$ calls to \stageIII, with 
at most $O(d^2)$
constraints each.
\end{theorem}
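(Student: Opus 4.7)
The plan is to separate the two quantitative claims and handle them in turn. First, the ``constraints per call'' bound is immediate: in every round, the algorithm selects an $r$-subset $\hat{R}^{(i)}$ from the multiset $\hat{H}^{(i)}$ and collapses it via $\psi$ to obtain $R^{(i)}$, so $|R^{(i)}|\leq r = 2d^2 = O(d^2)$, and this is precisely the number of constraints passed to \stageIII.

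For the bound on the expected number of calls, the key observation is that \stageII~terminates exactly at the first non-controversial round. Hence if $N$ denotes the (random) number of rounds actually executed by \stageII, then $N\geq i$ if and only if rounds $1,\ldots,i-1$ are all controversial, which is precisely the event $C_{i-1}$ in the coupling with \stageIIforever. I would therefore write
\[
\Exp[N] \;=\; \sum_{i=1}^{\infty}\Pr[N\geq i] \;=\; 1 + \sum_{\ell=1}^{\infty}\Pr[C_\ell],
\]
and reduce the problem to bounding this tail sum using inequality~(\ref{eq:C_k_up}), $\Pr[C_\ell]\leq n\alpha^\ell$ with $\alpha = \alpha(d,c)<1$.

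The main calculation is then standard: set $L := \lceil \log_{1/\alpha} n\rceil$, bound each of the first $L$ terms trivially by $1$, and bound the geometric tail by
\[
\sum_{\ell > L} n\alpha^\ell \;\leq\; \frac{n\alpha^{L+1}}{1-\alpha} \;\leq\; \frac{\alpha}{1-\alpha},
\]
since $\alpha^L\leq 1/n$. This gives $\Exp[N] = O(L) = O(\log_{1/\alpha} n)$. Finally, with $\alpha = 2^{(\log_2 e - c)/(cd)}$ and $c>\log_2 e$ fixed, we have
\[
\log_{1/\alpha} n \;=\; \frac{cd}{c-\log_2 e}\,\log_2 n \;=\; O(d\ln n),
\]
which completes the proof.

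The only step that requires any real thought is the reduction to \stageIIforever: one must observe that the multiplicities, random choices, and in particular the controversy of round $i$ depend only on the (stochastic) history up to round $i$, so the coupled process accurately bounds what \stageII~does up to its stopping time. Everything else is a short two-part series estimate, and substitution of the constant $c$ controlling $\alpha$. No additional probabilistic machinery is needed beyond the tail bound already derived above.
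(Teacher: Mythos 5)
Your proposal is correct and follows essentially the same route as the paper: reduce the expected number of rounds of \stageII{ }to the expected number of leading controversial rounds of \stageIIforever, bound that by $\sum_{\ell\geq 1}\Pr[C_\ell]$ via the tail estimate $\Pr[C_\ell]\leq n\alpha^\ell$ from (\ref{eq:C_k_up}), split the sum at roughly $\log_{1/\alpha} n$, and sum the geometric tail; the per-call bound $O(d^2)$ is the same observation about $|R^{(i)}|\leq r$. The only (cosmetic) difference is that you cut the sum at $\lceil\log_{1/\alpha} n\rceil$ instead of the paper's $\lceil\beta\log_{1/\alpha}n\rceil$ with $\beta>1$, which leaves a residual term $\alpha/(1-\alpha)=O(d)$ rather than $o(1)$ — still absorbed by $O(d\ln n)$, so the conclusion stands.
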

\begin{proof}
By definition  of $C_\ell$, the expected number of leading controversial rounds
in \stageIIforever{ }is 
\[
\sum_{\ell\geq 1}\Pr[C_\ell].
\]
For any $\beta>1$, we can use (\ref{eq:C_k_up}) to bound this by

\begin{math}
\renewcommand{\arraystretch}{2.15}
\begin{array}{lll}
\displaystyle \sum_{\ell=1}^{\lceil\beta\log_{1/\alpha}n\rceil-1}1 +
n \sum_{\ell=\lceil\beta\log_{1/\alpha}n\rceil}^{\infty}\alpha^\ell
&=&\lceil\beta\log_{1/\alpha}n\rceil-1 + 
n\frac{\alpha^{\lceil\beta\log_{1/\alpha}n\rceil}} {1-\alpha}\\ 
&\leq& \beta\log_{1/\alpha}n + \frac{n^{1-\beta}}{1-\alpha} \\
&=& \beta\log_{1/\alpha}n + o(1). 
\end{array}
\end{math}

This upper bounds the expected number of rounds in \stageII. In every round of \stageII~one call to
\stageIII~is made, using $c\,d^2$ constraints, where $c>\log_2e$
is constant.
\end{proof}

%%%%%%%%%%%%%%%%%%%%%%%%
\section{Hypercube Partitions}
Let $H$ be a finite set. Consider the graph on the vertices $2^H$, where two vertices $F, G$
are connected by an edge if they differ in exactly one element, i.e., $G=F\,\dot{\cup}\,\{h\}$, $h\in H$.
This graph is a hypercube of dimension $n=|H|$. For
the sets $A \subseteq  B\subseteq H$, we define 
 $[A, B]:= \{C\subseteq H \;|\; A\subseteq C \subseteq B\}$ and call any such
 $[A,B]$ an \emph{interval}. A \emph{hypercube  partition} is a partition $\mathcal{P}$
 of $2^H$ into (disjoint) intervals.
 
 Let $(H,\Violators)$ be a violator space. We call two sets $F,G\subseteq H$ \emph{equivalent}
 if $\Violators(F)=\Violators(G)$, and let $\mathcal{H}$ be the partition of $2^H$ into
 equivalence classes w.r.t. this relation. We call $\mathcal{H}$ the \emph{violation pattern}
 of $(H,\Violators)$.

Before we formulate and prove the Hypercube Partition Theorem, we need to introduce some 
notation. We extend the notion of violator spaces by the concept of \emph{anti-basis}.

\begin{definition}
Consider a violator space $(H,V)$. We say that $\bar{B}\subseteq H$ is an \emph{anti-basis} if
for all proper supersets $F \supset \bar{B}$ we have $\Violators(\bar{B})\cap F \neq \emptyset$.
An anti-basis of $G\subseteq H$ is a maximal superset $\bar{B}$ of $G$ with $\Violators(\bar{B})=
\Violators(G)$. 
 \end{definition}
 
 Note that a maximal superset $\bar{B}$ of $G$ such that $\Violators(\bar{B})=\Violators(G)$ is 
indeed 
 an anti-basis of $G$. Suppose that there is a set $\bar{B}' \supset \bar{B}$ with
 $\Violators(\bar{B})\cap\bar{B}'=\emptyset$. Locality then decrees that $\Violators(\bar{B})=
\Violators(\bar{B}')$, but this contradicts the maximality of $\bar{B}$.
 
 \begin{lemma}
 \label{lemma:uniqueantibasis}
Consider  the violator space $(H,\Violators)$. For any $G\subseteq H$ there is a unique
 anti-basis $\bar{B}_G$ of $G$.  
 \end{lemma}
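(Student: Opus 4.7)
The plan is to establish both existence and uniqueness. Existence is immediate: since $H$ is finite and $G$ itself is a superset of $G$ satisfying $\Violators(G)=\Violators(G)$, the family of supersets $F\supseteq G$ with $\Violators(F)=\Violators(G)$ is nonempty, and any inclusion-maximal member (which exists because the family is finite) is an anti-basis of $G$ by the remark following the definition.

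For uniqueness, I would use the standard ``closure'' trick: show that any two anti-bases of $G$ must be contained in their union, which is itself a superset of $G$ with the same violator set as $G$. Concretely, let $\bar{B}_1$ and $\bar{B}_2$ be two anti-bases of $G$. I plan to verify via locality that
\[
\Violators(\bar{B}_1\cup\bar{B}_2)=\Violators(\bar{B}_1)=\Violators(G).
\]
Apply locality with $F=\bar{B}_1$ and $G'=\bar{B}_1\cup\bar{B}_2$. The hypothesis to check is $(\bar{B}_1\cup\bar{B}_2)\cap\Violators(\bar{B}_1)=\emptyset$. The $\bar{B}_1$ part vanishes by consistency, and the $\bar{B}_2$ part equals $\bar{B}_2\cap\Violators(G)=\bar{B}_2\cap\Violators(\bar{B}_2)$, which is empty again by consistency.

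Once $\Violators(\bar{B}_1\cup\bar{B}_2)=\Violators(G)$, the set $\bar{B}_1\cup\bar{B}_2$ is a superset of $G$ with the same violator set. Maximality of $\bar{B}_1$ then forces $\bar{B}_1\cup\bar{B}_2=\bar{B}_1$, i.e.\ $\bar{B}_2\subseteq\bar{B}_1$; by symmetry $\bar{B}_1\subseteq\bar{B}_2$, and hence $\bar{B}_1=\bar{B}_2$.

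The only subtle step is the verification that locality applies to the union, but this is almost syntactic once one recalls that an anti-basis $\bar{B}_i$ of $G$ satisfies $\Violators(\bar{B}_i)=\Violators(G)$ and that consistency of the violator space prevents any element of $\bar{B}_i$ from sitting in $\Violators(\bar{B}_i)$. No appeal to monotonicity (Lemma \ref{lemma:monotonicity}) is needed, although one could alternatively phrase the argument using it; I find the direct use of locality cleaner.
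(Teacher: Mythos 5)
Your proof is correct and follows essentially the same route as the paper: both arguments use consistency of the two anti-bases (whose violator sets both equal $\Violators(G)$) to verify the locality hypothesis for their union, conclude $\Violators(\bar{B}_1\cup\bar{B}_2)=\Violators(G)$, and then contradict maximality. The only (harmless) additions are your explicit existence remark via finiteness and phrasing uniqueness directly rather than by contradiction.
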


 \begin{proof}
 Suppose that there exist two distinct anti-bases $\bar{B}$ and $\bar{B}'$ of G.
 Because of $\Violators(\bar{B})=\Violators(\bar{B}')$ and consistency we have that
 $(\bar{B}\cup\bar{B}')\cap \Violators(\bar{B}) = (\bar{B}\cup\bar{B}')\cap \Violators(\bar{B}') = 
\emptyset$.
 Therefore, by locality, $\Violators(\bar{B}\cup\bar{B}') = \Violators(\bar{B}') = \Violators(\bar{B})$.
 Since $\bar{B}$ and $\bar{B}'$ are distinct, it cannot be that
  $\bar{B}\backslash\bar{B}'= \emptyset$ and $\bar{B}'\backslash\bar{B} = \emptyset$
  at the same time. Then, in any case, $|\bar{B}\cup\bar{B}'|>|\bar{B}|$ or $|\bar{B}\cup\bar{B}'|>|
\bar{B}'|$
  holds, which contradicts the maximality of the anti-bases.
   \end{proof}
   
   \begin{corollary}
   Let $(H,\Violators)$ be a violator space, $G\subseteq H$, $B_G$ any basis of $G$, and $\bar{B}_G
$
   the unique anti-basis of $G$. Then for any set $F$, $B_G\subseteq F \subseteq \bar{B}_G$, $F$
   and $G$ are equivalent, i.e., $\Violators(F)= \Violators(G)$.
   \end{corollary}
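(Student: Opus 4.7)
The plan is to invoke monotonicity (Lemma \ref{lemma:monotonicity}) directly, since the hypothesis sandwiches $F$ between two sets on which $\Violators$ is already known to take the value $\Violators(G)$.

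First I would unpack the definitions at the two endpoints. By the definition of a basis, $B_G$ is a subset of $G$ satisfying $\Violators(B_G)=\Violators(G)$. Dually, by the definition of an anti-basis and by Lemma \ref{lemma:uniqueantibasis}, $\bar{B}_G$ is the (unique) superset of $G$ with $\Violators(\bar{B}_G)=\Violators(G)$. In particular, $B_G \subseteq G \subseteq \bar{B}_G$, and
\[
\Violators(B_G) \;=\; \Violators(G) \;=\; \Violators(\bar{B}_G).
\]

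Next I would apply monotonicity to the chain $B_G \subseteq F \subseteq \bar{B}_G$. Lemma \ref{lemma:monotonicity} states precisely that whenever $\Violators$ agrees on the endpoints of a chain $F' \subseteq E \subseteq G'$, it also agrees on any intermediate set $E$. Substituting $F' := B_G$, $E := F$, $G' := \bar{B}_G$ and using the equality displayed above, we obtain $\Violators(F) = \Violators(B_G) = \Violators(G)$, which is exactly the claim.

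There is no real obstacle here; the statement is essentially a repackaging of monotonicity, and the only mild subtlety is making sure that the chain $B_G \subseteq F \subseteq \bar{B}_G$ falls under the monotonicity hypothesis, which it does once one observes that $B_G$ and $\bar{B}_G$ have the same violator set by their respective defining properties. If desired, one could phrase the conclusion as the observation that the equivalence class of $G$ under the violation pattern $\mathcal{H}$ contains the entire interval $[B_G,\bar{B}_G]$, which foreshadows the hypercube partition interpretation developed in the remainder of the section.
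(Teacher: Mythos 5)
Your proof is correct and follows exactly the paper's route: the paper also derives the corollary as an immediate consequence of monotonicity (Lemma \ref{lemma:monotonicity}), and you have simply spelled out the observation that $\Violators(B_G)=\Violators(G)=\Violators(\bar{B}_G)$ before applying the lemma to the chain $B_G\subseteq F\subseteq\bar{B}_G$.
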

   \begin{proof}
   This is an immediate consequence of monotonicity (Lemma \ref{lemma:monotonicity}).
   \end{proof}

\begin{lemma}
\label{lemma:patterndetspace}
${\cal H}$ completely determines $(H,\Violators)$.
\end{lemma}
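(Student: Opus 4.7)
The plan is to recover $\Violators(G)$ for every $G\subseteq H$ using only the partition $\mathcal{H}$, via the identification
\[
\Violators(G) \;=\; H \setminus \bar{B}_G,
\]
where $\bar{B}_G$ is the unique anti-basis of $G$ (Lemma \ref{lemma:uniqueantibasis}). The key point is that $\bar{B}_G$ is visible in $\mathcal{H}$ as the \emph{maximum} (under set inclusion) of the block of $\mathcal{H}$ containing $G$, so both sides of the above identity can be read off from $\mathcal{H}$ alone.

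First I would upgrade $\bar{B}_G$ from being a maximal \emph{superset of $G$} inside its equivalence class to being the maximum of the whole class. Given any $F$ with $\Violators(F)=\Violators(G)=\Violators(\bar{B}_G)$, consistency applied to both $F$ and $\bar{B}_G$ gives $(F\cup \bar{B}_G)\cap \Violators(\bar{B}_G) = \emptyset$, and locality then yields $\Violators(F\cup \bar{B}_G) = \Violators(\bar{B}_G)$. Since $\bar{B}_G$ is an anti-basis, every proper superset of it meets $\Violators(\bar{B}_G)$, so we must have $F\cup \bar{B}_G = \bar{B}_G$, i.e.\ $F\subseteq \bar{B}_G$. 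Hence $\bar{B}_G$ is the unique maximum of its $\mathcal{H}$-block, which is a property of $\mathcal{H}$ alone.

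Next I would verify the two inclusions in $\Violators(G) = H \setminus \bar{B}_G$. The inclusion ``$\subseteq$'' is immediate from consistency applied to $\bar{B}_G$ together with $\Violators(G) = \Violators(\bar{B}_G)$. For ``$\supseteq$'', take any $h\in H\setminus \bar{B}_G$; the anti-basis property provides $(\bar{B}_G \cup \{h\}) \cap \Violators(\bar{B}_G) \neq \emptyset$, and since consistency rules out every element of $\bar{B}_G$ itself, the only remaining candidate is $h$, so $h\in \Violators(\bar{B}_G) = \Violators(G)$.

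Putting these two pieces together, the procedure ``for $G$, locate its block in $\mathcal{H}$, take the maximum $\bar{B}_G$ of that block, and set $\Violators(G):= H\setminus \bar{B}_G$'' reconstructs $\Violators$ from $\mathcal{H}$ alone. The one step requiring real care is the first one, namely showing that $\bar{B}_G$ is determined by the equivalence class of $G$ rather than by the pair $(G,\Violators)$; the rest is just consistency and locality applied to the single extra element $h$.
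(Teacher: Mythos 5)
Your proof is correct and follows essentially the same route as the paper: identify the unique anti-basis $\bar{B}_G$ of $G$ inside its block of ${\cal H}$ and reconstruct $\Violators(G)=H\setminus\bar{B}_G$. You merely spell out the two steps the paper leaves terse (that $\bar{B}_G$ is in fact the maximum of the whole block, and the two inclusions of $\Violators(\bar{B}_G)=H\setminus\bar{B}_G$), which is a harmless refinement rather than a different argument.
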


\begin{proof} Let $G\subseteq H$. There is a unique anti-basis
$\overline{B}_G$ of $G$, meaning that in ${\cal H}$, there is a unique
inclusion-maximal superset of $G$ in the same class of the partition.
This implies that $\Violators(G)=\Violators(\overline{B}_G=H\setminus\overline{B}_G)$, so
$(H,\Violators)$ is reconstructible from ${\cal H}$.
\end{proof}

\begin{lemma}
\label{lemma:nondegdetpartition}
If $(H,\Violators)$ is nondegenerate (unique bases), then ${\cal H}$ is a hypercube
partition. 
\end{lemma}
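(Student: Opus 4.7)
The plan is to show that every equivalence class $\mathcal{C}_G := \{F \subseteq H : \Violators(F) = \Violators(G)\}$ of $\mathcal{H}$ coincides with an interval $[A, B]$ of the Boolean lattice $2^H$, which is exactly what is required for $\mathcal{H}$ to be a hypercube partition. My natural candidates are $B := \bar{B}_G$, the unique anti-basis of $G$ furnished by Lemma~\ref{lemma:uniqueantibasis}, and $A := B_{\bar{B}_G}$, the unique basis of $\bar{B}_G$ furnished by nondegeneracy (the parenthetical ``unique bases'' in the lemma statement makes the intended reading explicit).

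For the upper inclusion, I take any $F \in \mathcal{C}_G$ and argue $F \subseteq \bar{B}_G$ as follows. Consistency gives $(F \cup G) \cap \Violators(F) = \emptyset$, so locality applied to the pair $F \subseteq F \cup G$ yields $\Violators(F \cup G) = \Violators(F) = \Violators(G)$. Hence $F \cup G$ lies in $\mathcal{C}_G$ and is a superset of $G$, and by the maximality of the anti-basis, $F \cup G \subseteq \bar{B}_G$, giving $F \subseteq \bar{B}_G$.

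For the lower inclusion, I pick any basis $B_F$ of $F$, obtained by iteratively removing elements from $F$ without changing the set of violators. Since $B_F \subseteq F \subseteq \bar{B}_G$, $\Violators(B_F) = \Violators(F) = \Violators(\bar{B}_G)$, and the minimality condition characterising a basis is intrinsic to the set itself (independent of any ambient set), $B_F$ is also a basis of $\bar{B}_G$; uniqueness forces $B_F = A$, so $A \subseteq F$. Combining the two inclusions gives $\mathcal{C}_G \subseteq [A, \bar{B}_G]$, and the Corollary immediately preceding this lemma provides the reverse inclusion (monotonicity ensures that anything sandwiched between a basis and the anti-basis is equivalent to $\bar{B}_G$). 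Hence $\mathcal{C}_G = [A, \bar{B}_G]$, an interval, and the partition $\mathcal{H}$ consists of subcubes.

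The main subtlety I expect is justifying the uniqueness of the basis of $\bar{B}_G$: a literal reading of the definition of nondegeneracy only guarantees uniqueness when $|\bar{B}_G| \geq \dim(H, \Violators)$. The lemma's parenthetical ``unique bases'' makes the stronger reading explicit, and I invoke it directly; absent that clarification, one needs the standard LP-type argument propagating uniqueness down to small sets, which needs some care here since the class of a small $G$ may sit entirely inside a small anti-basis.
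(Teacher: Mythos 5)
Your proof is correct and rests on essentially the same ingredients as the paper's: the unique anti-basis bounds each equivalence class from above (this is the paper's closure-under-unions step), the unique basis of that anti-basis bounds it from below (the paper's step that the unique basis $A$ of $B\cup B'$ is also the unique basis of $B$ and $B'$ and hence lies in $B\cap B'$), and monotonicity fills in the interval. Your reading of nondegeneracy as ``unique bases for all subsets'' is also the reading the paper's own proof implicitly uses (it invokes the unique basis of $B\cup B'$ without any size restriction), so that point needs no further care here.
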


\begin{proof}
We first show that $\Violators(B)=\Violators(B')$ implies $\Violators(B\cap B')=\Violators(B\cup
B')=\Violators(B)$. The latter has been shown for the existence of a unique
anti-basis. For the former, we argue as follows. Let $A$ be the unique
basis of $B\cup B'$. Then $\Violators(A)=\Violators(B)=\Violators(B')$. But then $A$ is also the
unique basis of $B$ and $B'$. It follows that $A\subseteq B\cap B'$,
and by locality we get $\Violators(A)=\Violators(B\cap B')=\Violators(B)$.

This argument implies that any partition class ${\cal C}$ is contained
in the interval
$[\bigcap_{C\in{\cal C}}C, \bigcup_{C\in{\cal C}}C]$.
On the other hand, the whole interval is contained in ${\cal C}$ by locality,
so we are done. 
\end{proof}

Lemma \ref{lemma:patterndetspace} and  \ref{lemma:nondegdetpartition}
together imply that there is an injective mapping 
from the set of nondegenerate violator spaces to the set of hypercube
partitions. It remains to show that the mapping is surjective.

\begin{theorem}
Any hypercube partition ${\cal P}$ is the violation pattern
of some nondegenerate violator space $(H,\Violators)$
 \end{theorem}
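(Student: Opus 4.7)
The plan is to read the basis and anti-basis off the partition directly. Given $\mathcal{P}$ and any $G\subseteq H$, let $[A_G,B_G]$ denote the unique interval of $\mathcal{P}$ containing $G$, and define
\[
\Violators(G) := H\setminus B_G.
\]
I will then verify, in order, that $(H,\Violators)$ is a violator space, that it is nondegenerate, and that its violation pattern is exactly $\mathcal{P}$.

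Consistency is free: $G\subseteq B_G$ gives $G\cap\Violators(G)=\emptyset$. For locality, suppose $F\subseteq G$ with $G\cap\Violators(F)=\emptyset$. Unpacking, this says $G\subseteq B_F$, and combined with $A_F\subseteq F\subseteq G$ it gives $G\in[A_F,B_F]$. Since the intervals of $\mathcal{P}$ are disjoint, $[A_G,B_G]=[A_F,B_F]$, so $\Violators(G)=\Violators(F)$. This single computation is really the only place where the interval structure of $\mathcal{P}$ gets used, and I expect it to be the one and only non-trivial step of the proof.

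For nondegeneracy I plan to show, for every $G\subseteq H$, that $A_G$ is the unique basis of $G$. A subset $C\subseteq G$ satisfies $\Violators(C)=\Violators(G)$ iff $B_C=B_G$, iff $C\in[A_G,B_G]$, iff $A_G\subseteq C\subseteq G$ (using $G\subseteq B_G$). The unique inclusion-minimal such $C$ is $A_G$ itself, giving a unique basis of every set and hence nondegeneracy.

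Finally, two sets $F,G$ are equivalent under $\Violators$ iff $B_F=B_G$, which (again by disjointness of the intervals) is equivalent to $F$ and $G$ lying in the same interval of $\mathcal{P}$. So the violation pattern of $(H,\Violators)$ coincides with $\mathcal{P}$, completing the argument and, together with Lemmas \ref{lemma:patterndetspace} and \ref{lemma:nondegdetpartition}, establishing the claimed bijection between nondegenerate violator spaces and hypercube partitions.
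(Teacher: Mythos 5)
Your proposal is correct and follows essentially the same route as the paper: it defines $\Violators(G)=H\setminus B_G$ from the unique interval of $\mathcal{P}$ containing $G$, checks consistency trivially and locality via disjointness of the intervals, and reads the violation pattern off the upper endpoints. The only cosmetic difference is in the nondegeneracy step, where you identify the lower endpoint $A_G$ explicitly as the unique basis of $G$, while the paper argues via closure of a partition class under intersection; both arguments are immediate and equivalent in substance.
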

\begin{proof}
Let $G\subseteq H$, and let $[B,B']$ be the interval containing
$G$. We define $\Violators(G)=H\setminus B'$ and claim that this is a
nondegenerate violator space with violation pattern ${\cal P}$. The
latter is clear, since $\Violators(F)=\Violators(G)$ if and only if $F,G\subseteq
[B,B']$. To see the former, we observe that consistency holds because
of $G\subseteq B'$. To prove locality, choose $G\subseteq G'$ with
$H\setminus B'=\Violators(G)\cap G'=\emptyset$. In particular, $G'\subseteq
B'$, so $G'$ is also in $[B,B']$ and we get $\Violators(G)=\Violators(G')$ by definition
of $\Violators$.

It remains to show that the violator space thus defined is nondegenerate.
Let $B,B'$ be two sets with $\Violators(B)=\Violators(B')$, meaning that they are in the
same partition class of ${\cal P}$. But then $B\cap B'$ is also in the
same class, and we get $\Violators(B)=\Violators(B\cap B')$. This implies existence of
unique bases.
\end{proof}

\section{Conclusion}
We analyzed Clarkson's algorithm
in what we believe to be its most general as well as natural
setting. Additionally, we have given the equivalence between non-degenerate violator
spaces and hypercube partitions, which could help 
identifying further applications in computational geometry as well as other fields
of computer science. Another major challenge will be to establish
a subexponential analysis for the third stage, \stageIII, in the framework of violator spaces
(as there already exists for LP's and LP-type problems), in order
to get stronger bounds when the dimension is only moderately small.

\newpage
\bibliographystyle{plain}

\end{document}